\definecolor{darkred}{rgb}{0.57,0,0.12}
\let\nc\newcommand
\nc{\note}[1]{{\color{blue!90!black} #1}}
\nc{\noteb}[1]{{\color{red!80!black}\textbf{#1}}}
\nc{\incoh}[1]{{\mathcal{C}^{(#1-1)}}}
\nc{\produc}[1]{{\mathcal{P}^{(#1)}}}
\nc{\cmin}{c_{\min}}
\nc{\Sch}{\mathrm{Sch}}
\nc{\CG}{C_{\text{GME}}}
\nc{\gram}[1]{G^{(#1)}}
\nc{\pen}{^{-1}}
\DeclareMathOperator{\Tr}{Tr}
\DeclareMathOperator{\NN}{NN}
\nc{\psic}{\psi^{c}}
\nc{\two}[1]{\underline{2^{d-#1}}}
\nc{\Hanc}{\mathcal{H}_{\text{anc}}}
\nc{\psianc}{\psi_{\text{anc}}}
\nc{\lampow}{\lambda^{1/d}}
\nc{\norm}[2]{\left\lVert#1\right\rVert_{\,#2}}
\nc{\proj}[1]{\ket{#1}\!\bra{#1}}
\nc{\pro}[1]{#1 #1^\dagger}
\nc{\lnorm}[2]{\left\lVert#1\right\rVert_{\ell_{#2}}}
\nc{\VV}{V^{\,\C}_\Sp}
\nc{\R}{\mathcal{R}}
\nc{\W}{\mathcal{W}}
\nc{\T}{\mathcal{T}}
\nc{\B}{\mathcal{B}}
\nc{\C}{\mathcal{C}}
\nc{\U}{\mathcal{U}}
\nc{\E}{\mathcal{E}}
\nc{\EE}{\mathscr{E}}
\nc{\K}{\mathcal{K}}
\nc{\V}{\mathcal{V}}
\nc{\X}{\mathcal{X}}
\nc{\F}{\mathcal{F}}
\nc{\G}{\mathcal{G}}
\nc{\D}{\mathcal{D}}
\nc{\Y}{\mathcal{Y}}
\nc{\M}{\mathcal{M}}
\nc{\N}{\mathcal{N}}
\nc{\I}{\mathcal{I}}
\nc{\Q}{\mathbb{Q}}
\nc{\RR}{\mathbb{R}}
\nc{\CC}{\mathbb{C}}
\nc{\HH}{\mathbb{H}}
\nc{\MM}{\mathbb{M}}
\nc{\DD}{\mathbb{D}}
\renewcommand{\NN}{\mathbb{N}}
\nc{\J}{\mathcal{J}}
\nc{\FF}{\mathbb{F}}
\nc{\OO}{\mathbb{O}}
\newcommand\floor[1]{\left\lfloor#1\right\rfloor}
\newcommand\ceil[1]{\left\lceil#1\right\rceil}
\let\succc\succ
\newcommand{\PPT}{\text{\rm PPT}}
\nc{\mleq}{\preceq}
\nc{\mgeq}{\succeq}
\nc{\mle}{\prec}
\nc{\mge}{\succc}
\renewcommand{\succ}{{\mathrm{succ}}}
\nc{\CPTP}{{\mathrm{CPTP}}}
\nc{\CP}{{\mathrm{CP}}}
\nc{\CPTNI}{{\mathrm{CPTNI}}}
\nc{\ext}[1]{\operatorname{ext}\left(#1\right)}
\renewcommand{\bar}{\;\rule{0pt}{9.5pt}\right|\;}
\nc{\lset}{\left\{\left.}
\nc{\rset}{\right\}}
\nc{\ve}{\varepsilon}
\nc{\cbraket}[1]{\left|\braket{#1}\right|}
\nc{\id}{\mathbbm{1}}
\nc{\idc}{\mathrm{id}}
\nc{\mnorm}[1]{\norm{#1}{[m]}}
\nc{\knorm}[1]{\norm{#1}{(k)}}
\theoremstyle{plain}
\newtheorem{theorem}{Theorem}
\newtheorem{proposition}[theorem]{Proposition}
\newtheorem{lemma}[theorem]{Lemma}
\theoremstyle{definition}
\newtheorem*{remark}{Remark}
\let\oldproofname\proofname
\renewcommand{\proofname}{\rm\bf{\oldproofname}}
\nc{\lsetr}{\left\{\,}
\nc{\rsetr}{\right.\right\}}
\nc{\barr}{\;\rule{0pt}{9.5pt}\left|\;}
\nc{\prob}{\mathrm{prob}}
\nc{\wt}{\widetilde}
\nc{\bb}{{\bullet\bullet}}
\nc{\phim}{{\phi_{\star}}}
\let\epsilon\varepsilon
\nc{\PPTP}{{\rm PPTP}}
\nc{\wtD}{\wt{D}}
\let\textleq\relax
\let\textgeq\relax
\let\texteq\relax
\newcommand{\texteq}[1]{\stackrel{\mathclap{\scriptsize \mbox{#1}}}{=}}
\newcommand{\textleq}[1]{\stackrel{\mathclap{\scriptsize \mbox{#1}}}{\leq}}
\newcommand{\textgeq}[1]{\stackrel{\mathclap{\scriptsize \mbox{#1}}}{\geq}}
\renewcommand\onecolumngrid{% 
\do@columngrid{one}{\@ne}%
\def\set@footnotewidth{\onecolumngrid}% 
\def\footnoterule{\kern-6pt\hrule width 1.5in\kern6pt}%
}
\renewenvironment{boxed}[1][white]%
  {\expandafter\ifstrequal\expandafter{#1}{filled}{\begin{tcolorbox}[colback=gray!3,colframe=gray!20,breakable,enhanced,left=5.75pt,right=5.75pt,grow sidewards by=10pt]}{\begin{tcolorbox}[colback=white,colframe=gray!15,breakable,enhanced,left=5.75pt,right=5.75pt,grow sidewards by=10pt]}}%
  {\end{tcolorbox}}
\newcommand\xxrightarrow[2][]{\mathrel{%
  \setbox2=\hbox{\stackon{\scriptstyle#1}{\scriptstyle#2}}%
  \stackunder[0.4pt]{%
    \xrightarrow{\makebox[\dimexpr\wd2\relax]{$\scriptstyle#2$}}%
  }{%
   \scriptstyle#1%
  }%
}}
\newcommand{\tends}{\xxrightarrow[\! n\rightarrow \infty\!]{}}
\newcommand{\deff}[1]{\textbf{\emph{#1}}}
\newcommand{\toO}{\xxrightarrow[\vphantom{a}\smash{{\raisebox{0pt}{\ssmall{$\OO$}}}}]{}}
\newcommand{\toARNG}{\xxrightarrow[\vphantom{a}\smash{{\raisebox{0pt}{\ssmall{${\rm ARNG}$}}}}]{}}
\newcommand{\toARNGs}{\xxrightarrow[\vphantom{a}\smash{{\raisebox{0pt}{\ssmall{${\rm ARNG},s$}}}}]{}}
\newcommand{\toANEN}{\xxrightarrow[\vphantom{a}\smash{{\raisebox{0pt}{\ssmall{${{\rm ANE},N}$}}}}]{}}
\newcommand{\toANEs}{\xxrightarrow[\vphantom{a}\smash{{\raisebox{0pt}{\ssmall{${\rm ANE},s$}}}}]{}}
\newcommand{\toNE}{\xxrightarrow[\vphantom{a}\smash{{\raisebox{0pt}{\ssmall{${\rm NE}$}}}}]{}}
\nc{\RNG}{{\rm RNG}}
\nc{\NE}{{\rm NE}}
\nc{\ANE}{{\rm ANE}}
\nc{\ANEN}{{\rm ANE},N}
\nc{\ANEs}{{\rm ANE},s}
\nc{\ARNG}{{\rm ARNG}}
\nc{\RNGdelta}{{{\rm RNG}_\delta}}
\nc{\RNGdeltan}{{{\rm RNG}_{\delta_n}}}
\newcommand{\SEP}{\text{\rm SEP}}
\def\l@subsection#1#2{}
\def\l@subsubsection#1#2{}
\newcommand{\para}[1]{\subsection*{#1}}
\newcommand{\parait}[1]{\smallskip\textit{#1}.\,---}
\nc{\ba}{\begin{equation}\begin{aligned}}
\nc{\ea}{\end{aligned}\end{equation}}
\let\bb\ba
\let\ee\ea
\let\e\ve
\let\ketbra\proj
\begin{document}

\title{Reversibility of quantum resources through probabilistic protocols}

\author{Bartosz Regula}
\email{bartosz.regula@gmail.com}
\affiliation{Mathematical Quantum Information RIKEN Hakubi Research Team, RIKEN Cluster for Pioneering Research (CPR) and RIKEN Center for Quantum Computing (RQC), Wako, Saitama 351-0198, Japan}

\author{Ludovico Lami}
\email{ludovico.lami@gmail.com}
\affiliation{QuSoft, Science Park 123, 1098 XG Amsterdam, the Netherlands}
\affiliation{Korteweg--de Vries Institute for Mathematics, University of Amsterdam, Science Park 105-107, 1098 XG Amsterdam, the Netherlands}
\affiliation{Institute for Theoretical Physics, University of Amsterdam, Science Park 904, 1098 XH Amsterdam, the Netherlands}

\begin{abstract}%
Among the most fundamental questions in the manipulation of quantum resources such as entanglement is the possibility of reversibly transforming all resource states. The key consequence of this would be the identification of a unique entropic resource measure that exactly quantifies the limits of achievable transformation rates. Remarkably, previous results claimed that such asymptotic reversibility holds true in very general settings; however, recently those findings have been found to be incomplete, casting doubt on the conjecture. Here we show that it is indeed possible to reversibly interconvert all states in general quantum resource theories, as long as one allows protocols that may only succeed probabilistically. Although such transformations have some chance of failure, we show that their success probability can be ensured to be bounded away from zero, even in the asymptotic limit of infinitely many manipulated copies. As in previously conjectured approaches, the achievability here is realised through operations that are asymptotically resource non-generating, and we show that this choice is optimal: smaller sets of transformations cannot lead to reversibility. Our methods are based on connecting the transformation rates under probabilistic protocols with strong converse rates for deterministic transformations, which we strengthen into an exact equivalence in the case of entanglement distillation. 
\end{abstract}

\maketitle

\stepcounter{part}

%%%%%%%%%%%%%%%%%%%%%%%%%%%%%%%%%%%%%%%%%%%%%%%%%%%%%%%%%%%%%%%%%%%%%%%%%%%%%%%%%%%%%%%%
%%%%%%%%%%%%%%%%%%%%%%%%%%%%%%%%%%%%%%%%%%%%%%%%%%%%%%%%%%%%%%%%%%%%%%%%%%%%%%%%%%%%%%%%
%%%%%%%%%%%%%%%%%%%%%%%%%%%%%%%%%%%%%%%%%%%%%%%%%%%%%%%%%%%%%%%%%%%%%%%%%%%%%%%%%%%%%%%%
%%%%%%%%%%%%%%%%%%%%%%%%%%%%%%%%%%%%%%%%%%%%%%%%%%%%%%%%%%%%%%%%%%%%%%%%%%%%%%%%%%%%%%%%

How to measure and compare quantum resources? As evidenced by the plethora of commonly used quantifiers of resources such as entanglement~\cite{horodecki_2001,chitambar_2019}, this seemingly basic question has many possible answers, and it may appear as though there is no unambiguous way to resolve it. However, it is important to keep in mind that, besides simply assigning some numerical value to a given quantum state, one often wishes to compare quantum resources operationally. Is it then possible for a resource quantifier to indicate exactly how difficult it is to convert one resource state into another? Following this pathway is reminiscent of the operational approach used to study thermodynamics, where indeed a unique resource measure --- the entropy --- emerges naturally from basic axioms~\cite{giles_1964,lieb_1999}. A phenomenon that is intimately connected with the existence of such a measure is reversibility: two comparable states of equal entropy can always be connected by a reversible adiabatic transformation~\cite{giles_1964,lieb_1999}.

Reversibility was observed also in the asymptotic manipulation of quantum entanglement of pure states~\cite{bennett_1996-1}, prompting several conjectures about the connections between entanglement and thermodynamics, and in particular about the existence of a unique operational measure of entanglement that would mirror the role of entropy~\cite{popescu_1997-2,vedral_2002,horodecki_2002,brandao_2008-1,brandao_2010}. 
In the asymptotic setting, reversibility is typically understood in terms of asymptotic transformation rates $r(\rho \to \omega)$: given many copies of a state $\rho$, how many copies of another state $\omega$ can we obtain per each copy of $\rho$? The question of resource reversibility then asks whether $r(\rho \to \omega) = r(\omega \to \rho)^{-1}$, meaning that exactly as many copies of $\omega$ can be obtained in the transformation as are needed to transform them back into $\rho$. Although entanglement of noisy states may exhibit irreversibility in many contexts~\cite{horodecki_1998,vidal_2001,wang_2017-1,lami_2023}, hopes persisted that an operational approach allowing for universal reversibility could be constructed, leading to the identification of a unique asymptotic measure of entanglement~\cite{OpenProblemArxiv}.

A remarkable axiomatic framework emerged, first for entanglement~\cite{brandao_2008-1,brandao_2010} and later for more general quantum resources~\cite{brandao_2015}, which claimed that reversibility can indeed always be achieved under suitable assumptions. Such a striking property would not only establish a unique entropic measure of quantum resources, but also connect the broad variety of different resources in a common operational formalism. However, issues have transpired in parts of the proof of these results~\cite{berta_2022,berta_2023-1}, putting this general reversibility into question. As of now, there is no known framework that can establish the reversibility of general quantum resource theories --- and in particular quantum entanglement --- even under weaker assumptions. What is more, recent results demonstrated an exceptionally strong type of irreversibility of entanglement~\cite{lami_2023}, casting doubt on the very possibility of recovering reversible manipulation whatsoever. 

In this work, we resolve the question by constructing the first complete reversible framework for general quantum resources, including entanglement.
Our setting closely resembles the original assumptions of the reversibility conjectures~\cite{brandao_2008-1,brandao_2010,brandao_2015}, with only one change: we allow probabilistic conversion protocols. That is, we study transformations which allow for some probability of failure, and we demonstrate that in this setting the conversion rates are exactly given by the entropic resource measure known as the regularised relative entropy, identifying it as the unique operational resource quantifier.

The use of probabilistic protocols is a distinguishing feature of our approach, necessitating a careful consideration of how exactly to quantify the asymptotic rates of such transformations. We employ a definition which focuses on the number of copies of states that are undergoing the transformation, discounting the success probability of the protocols.
Although seemingly more permissive than some previously used definitions, we explicitly ensure that the protocols that we study are not unphysically difficult to realise: we only allow transformations whose probability of failure does not become prohibitively large, in the sense that there always remains a constant non-zero chance of successful conversion, even when manipulating an unbounded number of quantum states.
Such rates are well behaved and closely connected to conventional asymptotic transformation rates  studied in quantum information.

We stress that, although conceptually similar, our approach follows an alternative pathway that does not exactly recover the reversibility conjectured in Refs.~\cite{brandao_2008-1,brandao_2010,brandao_2015}, as the latter relied only on strictly deterministic transformation rates. However, in light of the similarities and relations that we establish between probabilistic and deterministic rates,
we consider our results to be strong supporting evidence in favour of reversibility being an achievable phenomenon in general quantum resource theories.

On the technical side, the way we avoid issues associated with the generalised quantum Stein's lemma~\cite{brandao_2010-1} that undermined the original reversibility claims, is to use only the \emph{strong converse} part of the lemma, which is still valid~\cite{berta_2022}. Strong converse rates are typically understood as general no-go limitations on resource transformations, but here we turn them into achievable rates precisely by employing probabilistic protocols.
For the special case of entanglement distillation, we show that these two concepts --- strong converse rates in deterministic transformations on one side, and probabilistic conversion rates on the other --- are exactly equivalent, which holds true also 
in the most practically relevant settings of entanglement manipulation such as under local operations and classical communication (LOCC).

%%%%%%%%%%%%%%%%%%%%%%%%%%%%%%%%%%%%%%%%%%%%%%%%%%%%%%%%%%%%%%%%%%%%%%%%%%%%%%%%%%%%%%%%

\section*{Results}

\para{Resource transformation rates} Quantum resource theories represent various settings of restricted quantum information processing~\cite{chitambar_2019}. Let us denote by $\OO$ the set of operations which are freely allowed within the physical setting of the given resource theory. Our discussion of reversibility will require a specific choice of $\OO$, but for now it may be understood as a general set of permitted operations.

The \emph{deterministic transformation rate} $r_{p=1}(\rho \to \omega)$ is defined as the supremum of real numbers $r$ such that $n$ copies of the state $\rho$ can be %approximately
converted into $\floor{rn}$ copies of the target state $\omega$ using the free operations $\OO$. The conversion here is assumed to be deterministic, i.e.\ all transformations are realised by completely positive and trace-preserving maps (quantum channels). However, the process is only required to be approximate, in the sense that some error $\ve_n$ is allowed in the transformation, as long as it vanishes in the limit as $n\to\infty$.

In many practical contexts, one may be willing to relax the assumption that the error must tend to zero --- it may, for instance, be appealing to tolerate some manageable error in the transformation if it could lead to increased transformation rates. The ultimate upper bound that constrains the improvements that can be gained through such trade-offs is represented by the \emph{strong converse rate} $r^\dagger_{p=1}(\rho \to \omega)$. It is defined as the least rate $r$ such that, if we attempt the conversion $\rho^{\otimes n} \to \omega^{\otimes \floor{r' n}}$ at any larger rate $r' > r$, then %the fidelity of the transformation cannot remain above $0$, meaning that 
even approximate transformations with large error become impossible.

Another common way to increase the capabilities in resource manipulation is to allow for probabilistic transformations~\cite{bennett_1996,horodecki_1999-1,vidal_1999-1,lo_2001,regula_2022,kondra_2022,regula_2023}. Probabilistic protocols in quantum information theory are represented by a collection of completely positive, but not necessarily trace-preserving maps $\{ \E^{(i)} \}_i$, such that the total transformation $\sum_i \E^{(i)}$ preserves trace. We say that $\rho$ can be converted to $\omega$ if there exists a free probabilistic operation $\E^{(i)} \in \OO$ such that $\frac{\E^{(i)}(\rho)}{\Tr \E^{(i)}(\rho)} = \omega$; the probability of this transformation is $p = \Tr \E^{(i)}(\rho)$. 
The question then is how to exactly define the asymptotic rate of such protocols.

Consider a sequence of probabilistic operations $(\E_n)_n$ such that each $\E_n \in \OO$ converts $\rho^{\otimes n}$ to a state which is $\varepsilon_n$-close to the target state $\omega^{\otimes \floor{rn}}$ with the error vanishing asymptotically.
We will write $p_n = \Tr \E_n(\rho^{\otimes n})$ for the probability of successful conversion. 
One way to quantify the rate of such a protocol is to count the average number of copies of quantum states needed to realise the transformation, which means that our rate would be given by $p_n r$ rather than just $r$, since the protocol needs to be repeated $1/p_n$ times on average to ensure success.
But one may argue that there is an issue with such a definition: is it fair to say that manipulating $n$ copies of a quantum state %$p_n$ 
about $1/p_n$ times is as difficult as manipulating the larger number of $n / p_n$ copies all at once? This definition of a rate would make it seem so, since it counts the total number of copies needed in the protocol, and disregards the question of how many of those copies need to be coherently manipulated together. If the rate is supposed to quantify the difficulty in performing a state transformation, then this may not be an accurate assessment, considering that it is the manipulation of quantum states, rather than their generation, that is typically the bottleneck in practical quantum information processing.

An alternative way is then to simply say that, if $\rho^{\otimes n}$ is approximately converted to $\omega^{\otimes \floor{rn}}$ --- even probabilistically --- then the rate is $r$.
This definition focuses on the number of copies of states that are being transformed at once, and it does not count the probability $p_n = \Tr \E_n(\rho^{\otimes n})$ itself as part of the rate. 
However, there is again a potential issue with this approach, as leaving the probability of the transformation unconstrained effectively allows for conditioning on exponentially unlikely events, which then makes possible transformations that are conventionally known to be unachievable~\cite{regula_2023}. Such a phenomenon happens when the overall probability of success $p_n$ becomes vanishingly small. 
In order to exclude such unphysical protocols which cannot be implemented in practice, it thus becomes necessary to carefully constrain the success probability.

Our approach will then aim to find a middle ground: we will quantify rates in a way that only counts the number of transformed states, but we will explicitly forbid the possibility of the success probability becoming unphysically small. Specifically, to ensure that any considered protocol remains practically realisable, we will assume that the conversion probability is bounded away from zero. % in the limit as $n\to\infty$}.
We thus consider \emph{probabilistic transformation rates with non-vanishing probability}, $r_{p>0}(\rho \to \omega)$, defined as

\begin{equation}\begin{aligned}
    r_{p>0}(\rho \to \omega) \coloneqq &\sup_{(\E_n)_n} \Bigg\{ \,r \;\Bigg|\; \E_n \in \OO,\\
    & \lim_{n\to\infty} F\!\left( \frac{\E_n(\rho^{\otimes n})}{\Tr \E_n(\rho^{\otimes n})} ,\, \omega^{\otimes \floor{rn}} \right) = 1,\\
     & \liminf_{n\to\infty}\, \Tr \E_n(\rho^{\otimes n}) > 0 \Bigg\},
\end{aligned}\end{equation}
where $F$ denotes fidelity.
We stress that this imposes a strong restriction on the allowed protocols, as the overall probability of success must remain larger than a positive constant, even in the asymptotic limit where the number of transformed copies grows to infinity. In short, the number of copies of $\rho$ that need to be \emph{manipulated} at once to obtain $m$ copies of $\omega$ is asymptotically $n = m / r_{p>0}(\rho \to \omega)$, while the total number of copies of $\omega$ that need to be \emph{generated} is that number times a constant overhead.

To further motivate our choice of definition of a probabilistic rate, let us compare it with the two deterministic rates introduced in this section. 
The fact that the deterministic transformation rate $r_{p=1}$ is the smallest of the three types is clear from the definition. However, there is no obvious relation between the strong converse rate and the probabilistic one. We can nevertheless show that the rates actually form a hierarchy:
\begin{equation}\begin{aligned}\label{eq:rate_hierarchy}
    r_{p=1}(\rho \to \omega) \,\leq\, r_{p > 0}(\rho \to \omega) \,\leq\, r^\dagger_{p=1}(\rho \to \omega).
\end{aligned}\end{equation}
This demonstrates in particular that the probabilistic rate $r_{p>0}$ is well behaved, as it does not exceed conventional limitations imposed by strong converse rates. It also naturally fits into the information-theoretic framework for asymptotic transformations and may even provide a tighter restriction on deterministic transformation rates than those coming from strong converse bounds.

%%%%%%%%%%%%%%%%%%%%%%%%%%%%%%%%%%%%%%%%%%%%%%%%%%%%%%%%%%%%%%%%%%%%%%%%%%%%%%%%%%%%%%%%

\para{Free operations and reversibility} The asymptotic transformation rates depend heavily on the choice of the free operations $\OO$. Typically, practically relevant choices of free operations are subsets of \emph{resource--non-generating (RNG) operations}, defined as those maps $\E$ (possibly probabilistic ones) that satisfy $\frac{\E(\sigma)}{\Tr \E(\sigma)} \in \FF$ for all $\sigma \in \FF$. 
Here, $\FF$ stands for the set of free (resourceless) states of the given theory. The definition of RNG operations then means that these maps are not allowed to generate any resources for free, which is a very basic and undemanding assumption to make.

The framework of~\cite{brandao_2008-1,brandao_2010,brandao_2015} studied the manipulation of quantum resources under transformations which slightly relax the above constraint, imposing instead that small amounts of resources may be generated, as long as they vanish asymptotically. Specifically, let us consider the resource measure known as \emph{generalised (global) robustness} $R_\FF^g$, defined as~\cite{vidal_1999}
\begin{equation}\begin{aligned}\label{eq:gen_rob_def}
    R^g_\FF (\rho) \coloneqq \inf \lsetr \lambda \in \RR_+ \barr \frac{\rho + \lambda \omega}{1+\lambda} \in \FF,\; \omega \in \DD \rsetr,
\end{aligned}\end{equation}
where $\DD$ denotes the set of all states. The $\delta$-approximately resource--non-generating operations $\OO_{\RNGdelta}$ are then all maps $\E$ such that
\begin{equation}\begin{aligned}\label{eq:rng_def_delta}
    R^g_\FF \left( \frac{\E(\sigma)}{\Tr \E(\sigma)} \right) \leq \delta \quad \forall \sigma \in \FF.
\end{aligned}\end{equation}
Finally, the transformation rates under \emph{asymptotically resource--non-generating maps} $\OO_{\rm ARNG}$, whether deterministic or probabilistic, are defined as those where each transformation $\rho^{\otimes n} \to \omega^{\otimes \floor{rn}}$ is realised by a $\delta_n$-approximately RNG operation, with $\delta_n \to 0$ in the limit as $n\to\infty$. We will denote deterministic rates under such operations as $r_{p=1}(\rho \toARNG \omega)$, and analogously for the probabilistic rates $r_{p>0}$.

The main reason to study asymptotically RNG operations is their conjectured reversibility~\cite{brandao_2008-1,brandao_2010,brandao_2015}. 
Specifically, the claim is that the deterministic rates always equal
\begin{equation}\begin{aligned}\label{eq:BP_conjecture}
    r_{p=1}(\rho \toARNG \omega) \stackrel{?}{=} \frac{D^\infty_\FF(\rho)}{D^\infty_\FF(\omega)},
\end{aligned}\end{equation}
where $D^\infty_\FF$ denotes the \emph{regularised relative entropy of a resource},
\begin{equation}\begin{aligned}
    D^\infty_\FF(\rho) \coloneqq \lim_{n\to\infty} \frac1n 
    \,\inf_{\sigma \in \FF}\, D(\rho^{\otimes n} \| \sigma_n)
\end{aligned}\end{equation}
with $D(\rho\|\sigma) = \Tr \rho (\log\rho - \log\sigma)$ being the quantum relative entropy. This would precisely identify $D^\infty_\FF$ as the unique resource measure in the asymptotic setting. However, this conjecture relied crucially on the generalised quantum Stein's lemma~\cite{brandao_2010-1}, in whose proof a gap was recently discovered~\cite{berta_2022}. Hence, the statement in Eq.~\eqref{eq:BP_conjecture} is not known to be true~\cite{berta_2022}.

One may also wonder whether there are other possible candidates for operations that could lead to reversibility. This is especially relevant since the asymptotically resource non-generating maps $\OO_\ARNG$ are defined in an axiomatic way, and it may be appealing to study smaller classes of transformations constructed through more practically-minded considerations. 
However, such a possibility has been ruled out: in the context of deterministic transformations, essentially all sets of operations smaller than $\OO_\ARNG$ have been shown to lead to an irreversible theory of entanglement~\cite{lami_2023}. Importantly, the choice of the measure $R^g_\FF$ in the definition of $\OO_\ARNG$ is crucial, and even a small change of the resource quantifier can preclude reversibility. What this means is that any reversible theory of entanglement must actually generate exponentially large amounts of entanglement according to certain measures~\cite{lami_2023}.
But even such entanglement generation is not enough on its own: if one requires the considered transformations to also be `dually' resource non-generating (i.e., in the Heisenberg picture), then reversibility is impossible, even if one permits generating entanglement akin to $\OO_\ARNG$~\cite{DNE-distillable}. 
On the other hand, choosing more permissive types of operations, such as ones where the generated resources are quantified with the relative entropy $D^\infty_\FF$, may be too lax of a constraint, as such a theory trivialises by allowing for the distillation of unbounded amounts of entanglement~\cite{brandao_2010}. 
Altogether, this provides a strong motivation to study reversibility precisely under the class $\OO_\ARNG$, as it constitutes a `Goldilocks' %type 
set of operations that may allow for reversible manipulation while maintaining reasonable restrictions on the allowed transformations.

%%%%%%%%%%%%%%%%%%%%%%%%%%%%%%%%%%%%%%%%%%%%%%%%%%%%%%%%%%%%%%%%%%%%%%%%%%%%%%%%%%%%%%%%

\para{Probabilistic reversibility} The conjectured resource reversibility (Eq.~\eqref{eq:BP_conjecture}) was formulated in a remarkably general manner. The original claim was meant to apply not only to entanglement, but also to more general quantum resources, as long as the set $\FF$ satisfies a number of mild assumptions --- notably, it must be convex, and it must be such that the tensor product of any two free states remains free, as does their partial trace~\cite{brandao_2010-1,brandao_2015}. These are weak assumptions obeyed by the vast majority of theories of practical interest.

Our main result is a general probabilistic reversibility of quantum resources under the exact same assumptions.

\begin{theorem}\label{thm:reversibility}
For all quantum states $\rho$ and $\omega$, the transformation rate with non-vanishing probability of success under asymptotically resource--non-generating operations satisfies
\begin{equation}\begin{aligned}
    r_{p>0}(\rho \toARNG \omega) = \frac{D^\infty_\FF(\rho)}{D^\infty_\FF(\omega)}.
\end{aligned}\end{equation}
This implies in particular a general reversibility of state transformations: $r_{p>0}(\rho \toARNG \omega) =  r_{p>0}(\omega \toARNG \rho)^{-1}$ for all pairs of states.
\end{theorem}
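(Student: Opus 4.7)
The plan is to prove Theorem~\ref{thm:reversibility} by matching upper and lower bounds on $r_{p>0}(\rho \toARNG \omega)$. The upper bound comes for free from the hierarchy~\eqref{eq:rate_hierarchy}, namely $r_{p>0}\leq r^\dagger_{p=1}$, combined with the fact that $D^\infty_\FF$ is a strong converse monotone under asymptotically RNG transformations: any sequence of $\delta_n$-approximately RNG maps (with $\delta_n\to 0$) converting $\rho^{\otimes n}$ into a state of non-vanishing fidelity with $\omega^{\otimes m_n}$ must satisfy $n D^\infty_\FF(\rho) \geq m_n D^\infty_\FF(\omega) - o(n)$. This monotonicity follows from the strong converse part of the generalised quantum Stein's lemma, still valid after \cite{berta_2022}, together with asymptotic continuity of $D^\infty_\FF$ and its additivity under tensor products, and yields $r^\dagger_{p=1}(\rho \toARNG \omega) \leq D^\infty_\FF(\rho)/D^\infty_\FF(\omega)$.

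For the lower bound, I would construct, for every rate $r < D^\infty_\FF(\rho)/D^\infty_\FF(\omega)$, an explicit probabilistic ARNG protocol of the measure-and-prepare form
\[
\E_n(\eta) = \Tr(P_n \eta)\, \omega^{\otimes \lfloor rn \rfloor} + \Tr\!\bigl((I-P_n)\eta\bigr)\, \xi_n,
\]
where $P_n$ is a projector on $\H^{\otimes n}$ and $\xi_n$ is a carefully chosen free state. The $P_n$-branch realises the desired probabilistic transformation with success probability $\Tr(P_n \rho^{\otimes n})$. Using convexity of $R^g_\FF$ together with the asymptotic equipartition for the regularised max-relative entropy, giving $\log R^g_\FF(\omega^{\otimes \lfloor rn\rfloor}) = \lfloor rn\rfloor D^\infty_\FF(\omega) + o(n)$, the $\delta_n$-RNG requirement reduces to ensuring that $\sup_{\sigma_n \in \FF_n} \Tr(P_n \sigma_n)$ decay exponentially strictly faster than $2^{-rn D^\infty_\FF(\omega)}$, while $\Tr(P_n \rho^{\otimes n})$ stays bounded away from zero.

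The main obstacle is to exhibit projectors $P_n$ meeting these two conditions without invoking the disputed direct part of the generalised Stein's lemma. Because the strict inequality $r\, D^\infty_\FF(\omega) < D^\infty_\FF(\rho)$ places us inside the regime not ruled out by the strong converse, I would attempt this construction through a variational or convex-analytic characterisation of $D^\infty_\FF$ applied to near-optimal sequences of free states, combined with operator inequalities of the Hayashi--Nagaoka type. Crucially, the tests required here only need non-vanishing rather than vanishing type-I error, which is strictly weaker than the statement that failed in the original Brandão--Plenio argument; this ``strong converse rate $\Rightarrow$ probabilistic achievability'' passage is the technical crux highlighted in the introduction as the paper's key innovation.
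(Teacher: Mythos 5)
Your overall architecture --- a converse from strong-converse-type bounds plus the asymptotic equipartition property, and achievability via a measure-and-prepare protocol built on Stein-lemma tests --- matches the paper, but the achievability construction has two gaps that would each defeat it. First, you prepare $\omega^{\otimes\floor{rn}}$ exactly and assert $\log R^g_\FF(\omega^{\otimes\floor{rn}})=\floor{rn}D^\infty_\FF(\omega)+o(n)$; this is false in general. The AEP gives that scaling only for the \emph{smoothed} quantity, i.e.\ for states $\omega_n$ that are merely $\ve_n$-close to $\omega^{\otimes\floor{rn}}$; the unsmoothed robustness grows like the regularised max-relative entropy, which generically exceeds $D^\infty_\FF(\omega)$, so your RNG bookkeeping caps the achievable rate strictly below $D^\infty_\FF(\rho)/D^\infty_\FF(\omega)$. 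The paper's protocol must (and does) output these smoothed states $\omega_n$. Second, your map is trace-preserving as written, so it is deterministic and its output fidelity is governed by $\Tr(P_n\rho^{\otimes n})$, which in the strong-converse regime only tends to $1-\delta$ with $\delta$ possibly close to $1$: the fidelity does not tend to $1$. If instead you keep only the $P_n$-branch as the probabilistic operation, that branch sends every free state with nonzero overlap with $P_n$ to (a state proportional to) the highly resourceful target, and so is nowhere near resource--non-generating. The missing device is a tunable weight $\mu_n\in[0,1]$ on the failure branch, chosen to vanish (so the fidelity tends to $1$) yet bounded below by $(\lambda_n-1)/(a_n^{-1}-1)$ (so the map stays approximately RNG), together with choosing the failure output $\pi_n$ not as an arbitrary free state but as the specific state for which $(\omega_n+(\lambda_n-1)\pi_n)/\lambda_n$ is free. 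This three-way trade-off is the technical core of the paper's achievability proof and is absent from your sketch. Relatedly, the step you defer as the ``main obstacle'' --- exhibiting tests with non-vanishing type-I error and type-II exponent approaching $D^\infty_\FF(\rho)$ --- needs no new construction: it is exactly the still-valid strong converse part of the generalised quantum Stein's lemma, which the paper simply cites.

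On the converse, routing through $r_{p>0}\leq r^\dagger_{p=1}$ and a deterministic strong-converse bound is a legitimate alternative to the paper's direct argument, but your justification of that bound via asymptotic continuity of the relative entropy cannot work: asymptotic continuity controls entropy differences only up to a term of order (trace distance)$\,\times\log(\mathrm{dimension})$, which is of order $n$ when the error is merely bounded away from $1$, so it yields only weak converses. The paper instead proves a probabilistic monotonicity of the smoothed max-relative entropy under $\OO_{\RNG,\delta}$ maps (Lemma~\ref{lem:dmax}) and applies the AEP on both sides, and it remarks explicitly that asymptotic-continuity techniques are insufficient in the probabilistic setting. To repair your route, replace the asymptotic-continuity step with the smoothed-$D_{\max}$ argument or cite the known ARNG strong converse, which is proved that way.
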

Both the converse and the achievability parts of this result make use of the asymptotic equipartition property for the generalised robustness, which was shown by Brand\~ao and Plenio~\cite[Proposition~II.1]{brandao_2010-1} and independently by Datta~\cite[Theorem~1]{datta_2009-2}. This property says that, under a suitable `smoothing', the generalised robustness $R^g_\FF$ converges asymptotically to the regularised relative entropy of the resource:
\begin{equation}\begin{aligned}\label{eq:aep}
    \lim_{\ve\to0}\limsup_{n\to\infty} \frac1n \min_{\frac12 \norm{\omega_n - \omega^{\otimes \floor{rn}}}{1} \leq \ve}\!\log\! \left( 1+R^g_\FF(\omega_n)\right) = r \, D^\infty_\FF(\omega).
\end{aligned}\end{equation}
Importantly, this finding directly leads to the strong converse of the generalised quantum Stein's lemma~\cite[Corollary~III.3]{brandao_2010-1}, but it does not appear to be enough to deduce the main achievability part of the lemma~\cite{berta_2022}, which underlies the previous reversibility conjectures. Our main contribution here is to show that the strong converse part is sufficient to show the reversibility of quantum resources, as long as probabilistic protocols are allowed.
\begin{proof}[Proof sketch of Theorem~\ref{thm:reversibility}]
The converse direction relies on the strong monotonicity properties of the generalised robustness $R^g_\FF$ as well as the aforementioned asymptotic equipartition property~\eqref{eq:aep}. This follows a related approach that was recently used to study postselected probabilistic transformation rates~\cite{regula_2023}, and here we extend it to asymptotically resource--non-generating transformations ARNG. A point of note is that standard techniques for upper bounding transformations rates, based on the asymptotic continuity of the relative entropy~\cite{horodecki_2001,horodecki_2013-3,winter_2016-1}, do not seem to be sufficient to establish a converse bound on probabilistic rates~\cite[Appendix~H]{regula_2023}. Our approach requires the use of a different toolset that explicitly makes use of the features of $R^g_\FF$.

For the achievability part of the theorem, 
we use the exact calculation of the strong converse exponent in the generalised quantum Stein's lemma~\cite{brandao_2010-1}. The lemma is concerned with the distinguishability of many copies of a quantum state $\rho^{\otimes n}$ against all states in the set of the free states $\FF$. 
The result of~\cite{brandao_2010-1} then says that, for every resource theory, there exists a sequence of measurement operators $(A_n)_n$ such that $\Tr \left( A_n \rho^{\otimes n} \right) \geq 1-\delta_n$ and
\begin{equation}\begin{aligned}\label{eq:stein_sc}
- \frac1n \log \sup_{\sigma \in \FF} \Tr( A_n  \sigma) \tends D^\infty_\FF(\rho).
\end{aligned}\end{equation}
Here, $\delta_n$ denotes the probability of incorrectly guessing that $\rho^{\otimes n}$ is a free state, while the quantity in Eq.~\eqref{eq:stein_sc} characterises the opposite error of incorrectly guessing that a free state is $\rho^{\otimes n}$.
The issue here is that the proof of Ref.~\cite{brandao_2010-1}, and hence also Eq.~\eqref{eq:stein_sc}, %, since~\eqref{eq:stein_sc} is only a strong converse result,
is only valid in the strong converse regime: this means that the error $\delta_n$ is not guaranteed to vanish, but it may actually tend to a constant arbitrarily close to 1. This prevents a direct application of previous deterministic results~\cite{brandao_2015}. 

What we do instead is 
define probabilistic operations of the form
\begin{equation}\begin{aligned}
    \E_n(\tau) \coloneqq \Tr (A_n \tau)\, \omega_n + \mu_n  \Tr [ (\id - A_n) \tau ]\, \pi_n,
\end{aligned}\end{equation}
where: $\omega_n$ are states appearing in~\eqref{eq:aep} which are ${\ve_n}$-close to the target states $\omega^{\otimes \floor{n \, D^\infty_\FF(\rho)/D^\infty_\FF(\omega)}}$, $\pi_n$ are some suitably chosen states, and $\mu_n \in [0,1]$ are parameters to be fixed. 

The basic idea is then that by decreasing $\mu_n$, we can make the output of this operation closer to $\omega_n$, even when $\Tr (A_n \rho^{\otimes n}) \not\to 1$. However, one cannot just decrease $\mu_n$ arbitrarily, as the maps $\E_n$ must be ensured to be free operations. Our crucial finding is that $\mu_n$ can always be chosen so that $\mu_n \tends 0$ while the operations $\E_n$ generate asymptotically vanishing amounts of resources and the overall probability of success does not vanish. 
% this is achieved with the choice
% \begin{equation}\begin{aligned}
%   \mu_n = \frac{R^g_{\FF}(\omega_n)}{\left[\sup_{\sigma \in \FF_n} \Tr (A_n \sigma)\right]^{-1} - 1}.
% \end{aligned}\end{equation}
This means precisely that the sequence $(\E_n)_n$ is an ARNG protocol that realises the desired conversion.% $\rho \toARNG \omega$.
\end{proof}

The complete proof of Theorem~\ref{thm:reversibility} can be found in Section~\ref{sec:app_rev} of the~\hyperlink{supp}{Supplementary Information}.

%%%%%%%%%%%%%%%%%%%%%%%%%%%%%%%%%%%%%%%%%%%%%%%%%%%%%%%%%%%%%%%%%%%%%%%%%%%%%%%%%%%%%%%%%%%%%

\para{Optimality of ARNG transformations} To provide a stronger motivation for the choice of asymptotically resource--non-generating operations in the study of reversible transformations, we can show that this set of operations is essentially the smallest possible: more restrictive types of operations cannot lead to reversibility in general, even under probabilistic transformations. 

One natural way to constrain the allowed resource transformations is to forbid resource generation --- that is, instead of asymptotically resource--non-generating maps, consider strictly resource-non--generating ones.
An even more fine-grained restriction can be obtained by allowing for approximate resource generation, but choosing a more restrictive resource measure with which to quantify the generated resources.
To be precise, let us consider a modified notion of $\delta$-approximately RNG transformations that we will call $\OO_{\RNG,\delta,s}$. They are defined in exactly the same manner as ARNG rates, but instead of constraining the generalised robustness $R^g_\FF$ as in Eq.~\eqref{eq:rng_def_delta}, we impose that
\begin{equation}\begin{aligned}\label{eq:rngs}
    R^s_\FF \left( \frac{\E(\sigma)}{\Tr \E(\sigma)} \right) \leq \delta \quad \forall \sigma \in \FF,
\end{aligned}\end{equation}
where $R^s_\FF$ denotes the \emph{standard robustness}~\cite{vidal_1999}
\begin{equation}\begin{aligned}\label{eq:st_rob_def}
    R^s_\FF (\rho) \coloneqq \inf \lsetr \lambda \in \RR_+ \barr \frac{\rho + \lambda \sigma}{1+\lambda} \in \FF,\; \sigma \in \FF \rsetr.
\end{aligned}\end{equation}
This measure is very similar to the generalised robustness of Eq.~\eqref{eq:gen_rob_def}, but the state $\sigma$ is now required to be free; because of this, it holds that $R^s_\FF(\rho) \geq R^g_\FF(\rho)$ in general, making the constraint in~\eqref{eq:rngs} potentially more restrictive than before. 
We stress that strictly resource--non-generating transformations $\OO_{\RNG}$ are a subset of $\OO_{\RNG,\delta,s}$, so all irreversibility results shown for the latter apply also to the former.

We can use this to define modified transformation rates $r\big(\rho \toARNGs \omega\big)$ as those realised under $\OO_{\RNG,\delta_n,s}$ transformations with $\delta_n \to 0$ as $n \to \infty$.

By extending the methods that we used previously in the study of deterministic irreversibility~\cite{lami_2023}, we can show the following.
\begin{theorem}\label{thm:irrev_main}
Even in the probabilistic setting, general reversibility is not possible under operations that do not generate any resources or ones that only generate asymptotically vanishing amounts of resources according to the standard robustness. Specifically, in the resource theory of entanglement there exist states $\rho, \omega$ such that
\begin{equation}\begin{aligned}
  r_{p>0}\big(\rho \toARNGs \omega\big) \,<\, r_{p>0}\big(\omega \toARNGs \rho\big)^{-1}.
\end{aligned}\end{equation}
\end{theorem}

Together with our achievability result in Theorem~\ref{thm:reversibility}, this provides a complete characterisation of the landscape of reversibility in the probabilistic setting: general reversibility is indeed achievable with the asymptotically resource--non-generating transformations $\OO_\ARNG$, and not only is it impossible under operations that are not allowed to create any resources, but even a slightly more restrictive choice of operations obtained by the change of the underlying resource measure from $R^g_\FF$ to $R^s_\FF$ precludes reversibility in general quantum resource theories.

A detailed derivation of Theorem~\ref{thm:irrev_main}, together with technical details and extensions, can be found in Section~\ref{sec:app_irrev} of the~\hyperlink{supp}{Supplementary Information}.

%%%%%%%%%%%%%%%%%%%%%%%%%%%%%%%%%%%%%%%%%%%%%%%%%%%%%%%%%%%%%%%%%%%%%%%%%%%%%%%%%%%%%%%%

\para{Entanglement distillation} Two of the most important problems in the understanding of asymptotic entanglement manipulation concern the tasks of extracting `entanglement bits' (ebits), i.e.\ copies of the maximally entangled two-qubit singlet state $\Phi_+$, and the reverse task of converting ebits into general noisy states. The rates of these two tasks are known as, respectively, the \emph{distillable entanglement} $E_{d,\OO}^{x} (\rho) \coloneqq r_{x}(\rho \to \Phi_+)$ and the \emph{entanglement cost} $E_{c,\OO}^{x} (\rho) \coloneqq r_{x}(\Phi_+ \to \rho)^{-1}$, where $x$ stands for either $p\!\!=\!\!1$, $p\!>\!0$, or the strong converse rate $p\!\!=\!\!1,\!\dagger$.
Although exact expressions can be obtained for the entanglement cost in various settings~\cite{hayden_2001,brandao_2010,audenaert_2003}, the understanding of distillable entanglement appears to be an extremely difficult problem that has so far resisted most attempts at a conclusive solution, except in some special cases~\cite{devetak_2005, DNE-distillable}. Of note is the conjectured result that~\cite{brandao_2010,berta_2022}
\begin{equation}\begin{aligned}\label{eq:bp_dist_conj}
  E_{d,\rm NE}^{p=1}(\rho) \stackrel{?}{=} D^\infty_{\rm SEP}(\rho),
\end{aligned}\end{equation}
where NE stands for the class of non-entangling operations (equivalent to RNG maps in this theory) and $D^\infty_{\rm SEP}$ is the regularised relative entropy of entanglement.
Establishing this result would recover the deterministic reversibility of entanglement theory (that is, Eq.~\eqref{eq:BP_conjecture})~\cite{brandao_2010,berta_2022}.
We note that distillation rates under NE operations are equal to rates under asymptotically non-entangling operations (ANE)~\cite{lami_2023}, which correspond to $\OO_{\rm ARNG}$ in the notation of this work.

We now introduce a close relation that connects entanglement distillation transformations in the probabilistic and strong converse regimes.
Namely, we show that one can always improve on the transformation error of a distillation protocol by sacrificing some success probability, and vice versa. 
What this means in particular is that every rate that can be achieved in the deterministic strong converse regime (%i.e., 
i.e.\ with a possibly large error $\ve < 1$) can also be achieved probabilistically with error going to zero.
Crucially, to construct the new, modified protocol from the original one we only need to employ local operations and classical communication (LOCC), which are the standard class of free operations in entanglement theory, meaning that the result applies to essentially all different types of operations that extend LOCC.

\begin{theorem}\label{thm:entanglement}
Let $\OO$ be any class of operations which is closed under composition with LOCC, i.e.\ such that $\E \in \OO,\, \F \in \rm LOCC \,\Rightarrow\, \F \circ \E \in \OO$. This includes in particular the set $\rm LOCC$ itself. 
Then, for all states $\rho$,
\begin{equation}\begin{aligned}\label{eq:ent_equivalence}
  E_{d,\OO}^{p=1,\dagger}(\rho) = E_{d,\OO}^{p>0}(\rho).
\end{aligned}\end{equation}
For the case of (asymptotically) non-entangling operations, we have that
\begin{equation}\begin{aligned}\label{eq:ent_equivalence_ne}
  &E_{d,{\rm (A)NE}}^{p=1,\dagger}(\rho) = E_{d,{\rm (A)NE}}^{p>0}(\rho) = D^\infty_{\rm SEP}(\rho).
\end{aligned}\end{equation}
\end{theorem}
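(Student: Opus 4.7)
The plan is to prove both equations by leveraging the rate hierarchy~\eqref{eq:rate_hierarchy} together with an LOCC post-processing that turns a strong-converse achievability into a probabilistic one. The easy direction $E_{d,\OO}^{p>0}(\rho) \leq E_{d,\OO}^{p=1,\dagger}(\rho)$ is immediate from \eqref{eq:rate_hierarchy}. For the reverse, suppose $r < E_{d,\OO}^{p=1,\dagger}(\rho)$, so that there exist CPTP maps $\Lambda_n \in \OO$ with $k_n/n \to r$ and $\liminf_n F\bigl(\Lambda_n(\rho^{\otimes n}),\, \Phi_+^{\otimes k_n}\bigr) \geq \delta > 0$. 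My plan is to post-compose each $\Lambda_n$ with a probabilistic LOCC instrument $\Pi_n$ that, on any input $\sigma_n$ with $F(\sigma_n, \Phi_+^{\otimes k_n}) \geq \delta$, outputs a state $\varepsilon_n$-close to $\Phi_+^{\otimes k_n'}$, with $k_n'/n \to r$, $\varepsilon_n \to 0$, and success probability bounded away from zero. Since $\OO$ is closed under composition with LOCC by hypothesis, the sequence $\Pi_n \circ \Lambda_n$ lies in $\OO$, yielding $r \leq E_{d,\OO}^{p>0}(\rho)$.

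The main technical step is the construction of $\Pi_n$. A natural route is to first bilaterally twirl $\sigma_n$ with random $U \otimes U^*$, reducing to an isotropic state of the same singlet fraction $\geq \delta^2$, and then to apply a probabilistic LOCC distillation tailored to isotropic states: one that spends a sublinear (in $n$) number of ebits in order to push the output error to zero while maintaining a success probability of order $\delta^2$. This is the hard part of the proof: the naive projection onto $\Phi_+^{\otimes k_n}$ is not LOCC, and standard one-shot \emph{deterministic} LOCC distillation from an isotropic state of singlet fraction $f$ only extracts on the order of $f \log d$ ebits, losing a constant factor in the rate. Extracting $\log d - o(\log d)$ ebits with success probability bounded away from zero is precisely what requires the use of probabilistic protocols.

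With the first equation in hand, the second follows by combining it with earlier results. Specialising Theorem~\ref{thm:reversibility} to entanglement theory (with $\FF = {\rm SEP}$, so that $\OO_{\rm ARNG} = {\rm ANE}$) gives $E_{d,{\rm ANE}}^{p>0}(\rho) = D^\infty_{\rm SEP}(\rho)$. The first equation of Theorem~\ref{thm:entanglement} then pins down $E_{d,{\rm ANE}}^{p=1,\dagger}(\rho) = D^\infty_{\rm SEP}(\rho)$, and the equality of NE and ANE distillation rates noted in~\cite{lami_2023}, combined with the strong converse of the generalised quantum Stein's lemma~\cite{brandao_2010-1, berta_2022} which gives $E_{d,{\rm NE}}^{p=1,\dagger}(\rho) \leq D^\infty_{\rm SEP}(\rho)$, extends the identity to the NE case.
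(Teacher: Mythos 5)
Your overall architecture matches the paper's: reduce to an isotropic output by twirling, then post-compose with a probabilistic LOCC instrument that trades success probability for fidelity, and finally obtain Eq.~\eqref{eq:ent_equivalence_ne} by combining Eq.~\eqref{eq:ent_equivalence} with Theorem~\ref{thm:reversibility} and the known strong converse bound. The closing step for the (A)NE case is fine. However, there is a genuine gap at exactly the point you flag as ``the hard part'': the existence of the instrument $\Pi_n$ is asserted, not proved, and it is the entire content of the theorem. Everything else (the easy direction via the rate hierarchy, the twirling reduction, the bookkeeping of rates) is routine; without an explicit construction of $\Pi_n$ the argument is circular, since ``there exists a probabilistic LOCC protocol that pushes the error to zero while keeping the probability bounded away from zero'' is precisely the statement $E_{d,\OO}^{p=1,\dagger}\leq E_{d,\OO}^{p>0}$ restated at the one-shot level.

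The missing ingredient is elementary once named, which is why the omission matters. After twirling, the shared state is $(1-\varepsilon)\,\Phi_2^{\otimes m}+\varepsilon\,\tau_{2^m}$ with $1-\varepsilon\geq\delta$ (note that with the paper's convention $F(\rho,\sigma)=\norm{\sqrt{\rho}\sqrt{\sigma}}{1}^{2}$, the singlet fraction is $\geq\delta$, not $\delta^2$). Alice and Bob each measure $k$ of their qubits in the computational basis and compare outcomes by classical communication, accepting if and only if all $k$ outcomes agree pairwise. On the $\Phi_2^{\otimes m}$ component acceptance is certain and the residual state is exactly $\Phi_2^{\otimes (m-k)}$; on the (nearly maximally mixed) $\tau_{2^m}$ component acceptance occurs with probability roughly $2^{-k}$. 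Conditioned on acceptance the error is therefore suppressed to about $\varepsilon\, 2^{-k}/(1-\varepsilon)$ while the acceptance probability is at least $1-\varepsilon\geq\delta>0$. Choosing $k=k_n\to\infty$ with $k_n=o(n)$ drives the error to zero without affecting the rate, which is the inequality you need; the paper's Lemma on the transformations $\omega(m,\varepsilon,\delta)\to\omega(m-k,\varepsilon'',\delta'')$ is precisely this computation (together with a converse direction, yielding the invariant $p(1-\varepsilon)$, which your version of the argument does not require). You correctly diagnosed that a naive projection onto $\Phi_+^{\otimes k_n}$ is not LOCC and that deterministic hashing loses a constant factor, but the proof is incomplete until this comparison-test protocol (or an equivalent one) is written down and analysed.
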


\begin{proof}[Proof sketch]
Assume that two spatially separated parties, conventionally called Alice and Bob, share $n$ copies of an entangled state $\rho_{AB}$. 
Consider any sequence of protocols which allows them to distill entanglement from such states at a rate $r$ with with error $\ve_n \tends \ve$ and probability $p_n \tends p$. (This includes the deterministic strong converse case where $p_n=1$.) 
After performing the considered distillation protocol, they share a many-copy state $\tau_{A'B'}$ which approximates $\Phi_+^{\otimes \floor{rn}}$. What they can do now is to sacrifice a fixed number $k$ of their %copies 
qubit pairs in order to perform a state discrimination protocol: by testing whether the $k$-copy subsystem is in the state $\Phi_+^{\otimes k}$ and discarding their whole state when it is not, they can probabilistically bring the state of the rest of their shared system closer to $\Phi_+^{\otimes \floor{rn} - k}$. We show that this can be done by a simple LOCC protocol wherein Alice and Bob perform measurements in the computational basis and compare their outcomes. Since $k$ is arbitrary here, Alice and Bob can perform the modified protocol without reducing the asymptotic transformation rate. 
Conversely, in a similar manner they can also increase their probability of success by sacrificing some transformation fidelity. 
By deriving the exact conditions for when distillation protocols can be refashioned in such a away, we observe that another sequence of entanglement distillation protocols with error $\ve'_n \tends \ve'$ and probability $p'_n \tends p'$ can exist \emph{if and only if}
\begin{equation}\begin{aligned}%\label{eq:ent_equivalence}
  p\, (1-\ve) = p' \, (1-\ve').
\end{aligned}\end{equation}
This directly implies Eq.~\eqref{eq:ent_equivalence}.

To see Eq.~\eqref{eq:ent_equivalence_ne}, it suffices to combine Theorem~\ref{thm:reversibility} with the known result that 
 $D^\infty_{\rm SEP}(\rho)$ is a strong converse rate for distillation under ANE~\cite{hayashi_2006-2,brandao_2010}.
\end{proof}

We have already shown the probabilistic reversibility of entanglement theory in Theorem~\ref{thm:reversibility}, so let us now discuss the deterministic case. Here, reversibility is fully equivalent to the question of whether $E^{p=1}_{d,\OO}(\rho) = E^{p=1}_{c,\OO}(\rho)$ holds for all quantum states, which has been conjectured to be true for the class of asymptotically non-entangling operations~\cite{brandao_2010}. Combining our results with the known findings of Brand\~ao and Plenio~\cite{brandao_2010}, 
we have that
\begin{equation}\begin{aligned}
E_{d,\rm ANE}^{p>0}(\rho) &\overset{\text{Thm.~\ref{thm:entanglement}}}{=\joinrel=} E_{d,\rm ANE}^{p=1,\dagger}(\rho) \overset{\text{\scriptsize\cite{brandao_2010,berta_2022}}}{=\joinrel=} E_{c,\rm ANE}^{p=1}(\rho)\\
&\;\;\overset{\text{\scriptsize\cite{brandao_2010}}}{=\joinrel=} D^\infty_{\rm SEP}(\rho) \overset{\text{Thm.~\ref{thm:reversibility}}}{=\joinrel=} E_{c,\rm ANE}^{p>0}(\rho) .
\end{aligned}\end{equation}
The missing link is thus the question if $E_{d,\rm ANE}^{p=1}(\rho) \stackrel{?}{=} E_{d,\rm ANE}^{p>0}(\rho)$, or the equivalent~\cite{berta_2022} question of whether $E_{c,\rm ANE}^{p=1,\dagger}(\rho) \stackrel{?}{=} E_{c,\rm ANE}^{p>0}(\rho)$. Showing either of these statements would complete the proof of the deterministic reversibility of quantum entanglement under asymptotically non-entangling operations. An interesting consequence of the above is that establishing the equivalent of Theorem~\ref{thm:entanglement} for entanglement \emph{dilution} would be sufficient to recover a fully reversible entanglement theory.

We remark that other quantum resource theories may not be amenable to a characterisation in terms of distillation and dilution because they may not possess a suitably well-behaved unit of a resource resembling the maximally entangled state~\cite{brandao_2015,regula_2020,takagi_2022-1}. Nonetheless, reversibility in all resource theories can be understood as in our Theorem~\ref{thm:reversibility}.

%%%%%%%%%%%%%%%%%%%%%%%%%%%%%%%%%%%%%%%%%%%%%%%%%%%%%%%%%%%%%%%%%%%%%%%%%%%%%%%%%%%%%%%%

\section*{Discussion} 

We have shown that the conjectured reversibility of general quantum resources can be recovered, albeit in a \emph{probabilistic} manner that employs probabilistic protocols with non-vanishing probability of success.
This allowed us to identify the setting of probabilistic resource transformations as one that is completely governed by a unique entropic quantity --- the regularised relative entropy --- thus solidifying the parallels between thermodynamics and diverse types of quantum resources. We further showed that the choice of asymptotically resource--non-generating operations is optimal in this setting, in the sense that
all smaller classes of probabilistic operations are necessarily irreversible, providing a strong no-go restriction on how reversibility could be achieved.

Although this precise characterisation of asymptotic rates is appealing, our setting departs from the original reversibility conjectures of~\cite{brandao_2010,brandao_2015}, since it considers transformations that are only required to be achieved with some probability. This may not be enough to ensure the existence of a repeatable, reversible transformation cycle in practice. Nevertheless, in view of the close relations between probabilistic and deterministic rates (Eq.~\eqref{eq:rate_hierarchy}) we regard our results as evidence that reversibility could indeed be recovered also in the deterministic setting. This is further motivated by the fact that, in many quantum information processing tasks, strong converse rates actually coincide with the optimal achievable rates~\cite{winter_1999,ogawa_2000,ogawa_1999,konig_2009-1,bennett_2014,berta_2011,morgan_2014,tomamichel_2017}, meaning that $r_{p=1} = r^\dagger_{p=1}$ and the hierarchy in Eq.~\eqref{eq:rate_hierarchy} collapses. However, a complete proof of this fact in the setting of resource manipulation remains elusive, and it is still possible that one of the inequalities in Eq.~\eqref{eq:rate_hierarchy} may be strict for some states, which would rule out deterministic reversibility. 
We hope that our results stimulate further research in this direction, leading to an eventual resolution of the open questions that cloud the understanding of asymptotic resource manipulation.

\subsection*{Acknowledgments}
We are grateful to the anonymous referees of the conference Quantum Information Processing 2024 for helpful comments. 
We acknowledge discussions with Tulja Varun Kondra and Alexander Streltsov. We further thank the Freie Universit\"at Berlin for hospitality.

%%%%%%%%%%%%%%%%%%%%%%%%%%%%%%%%%%%%%%%%%%%%%%%%%%%%%%%%%%%%%%%%%%%%%%%%%%%%%%%%%%%%%%%%
%%%%%%%%%%%%%%%%%%%%%%%%%%%%%%%%%%%%%%%%%%%%%%%%%%%%%%%%%%%%%%%%%%%%%%%%%%%%%%%%%%%%%%%%
%%%%%%%%%%%%%%%%%%%%%%%%%%%%%%%%%%%%%%%%%%%%%%%%%%%%%%%%%%%%%%%%%%%%%%%%%%%%%%%%%%%%%%%%

\let\oldaddcontentsline\addcontentsline
\renewcommand{\addcontentsline}[3]{}
 \bibliographystyle{apsc}
 \bibliography{main}
\let\addcontentsline\oldaddcontentsline

%%%%%%%%%%%%%%%%%%%%%%%%%%%%%%%%%%%%%%%%%%%%%%%%%%%%%%%%%%%%%%%%%%%%%%%%%%%%%%%%%%%%%%%%
%%%%%%%%%%%%%%%%%%%%%%%%%%%%%%%%%%%%%%%%%%%%%%%%%%%%%%%%%%%%%%%%%%%%%%%%%%%%%%%%%%%%%%%%
%%%%%%%%%%%%%%%%%%%%%%%%%%%%%%%%%%%%%%%%%%%%%%%%%%%%%%%%%%%%%%%%%%%%%%%%%%%%%%%%%%%%%%%%
%%%%%%%%%%%%%%%%%%%%%%%%%%%%%%%%%%%%%%%%%%%%%%%%%%%%%%%%%%%%%%%%%%%%%%%%%%%%%%%%%%%%%%%%
%%%%%%%%%%%%%%%%%%%%%%%%%%%%%%%%%%%%%%%%%%%%%%%%%%%%%%%%%%%%%%%%%%%%%%%%%%%%%%%%%%%%%%%%
%%%%%%%%%%%%%%%%%%%%%%%%%%%%%%%%%%%%%%%%%%%%%%%%%%%%%%%%%%%%%%%%%%%%%%%%%%%%%%%%%%%%%%%%
%%%%%%%%%%%%%%%%%%%%%%%%%%%%%%%%%%%%%%%%%%%%%%%%%%%%%%%%%%%%%%%%%%%%%%%%%%%%%%%%%%%%%%%%
%%%%%%%%%%%%%%%%%%%%%%%%%%%%%%%%%%%%%%%%%%%%%%%%%%%%%%%%%%%%%%%%%%%%%%%%%%%%%%%%%%%%%%%%
%%%%%%%%%%%%%%%%%%%%%%%%%%%%%%%%%%%%%%%%%%%%%%%%%%%%%%%%%%%%%%%%%%%%%%%%%%%%%%%%%%%%%%%%
%%%%%%%%%%%%%%%%%%%%%%%%%%%%%%%%%%%%%%%%%%%%%%%%%%%%%%%%%%%%%%%%%%%%%%%%%%%%%%%%%%%%%%%%
%%%%%%%%%%%%%%%%%%%%%%%%%%%%%%%%%%%%%%%%%%%%%%%%%%%%%%%%%%%%%%%%%%%%%%%%%%%%%%%%%%%%%%%%

\onecolumngrid
\clearpage
\newgeometry{left=1.2in,right=1.2in,top=.7in,bottom=1in}

\stepcounter{part}
\counterwithin{theorem}{part}

\renewcommand{\theequation}{S\arabic{equation}}
\renewcommand{\thetheorem}{S\arabic{theorem}}
\setcounter{equation}{0}
\setcounter{theorem}{0}

\hypertarget{supp}{}
\begin{center}
\vspace*{\baselineskip}
{\textbf{\large --- Supplementary information ---}}\\[1pt] \quad \\
\end{center}

\setcounter{tocdepth}{0}
\tableofcontents

\section{Notation and basic properties}

We begin by clarifying the assumptions and definitions used in this work.

\subsection{Free operations and free states}\label{sec:axioms}

Throughout this paper we assume that the underlying Hilbert space has finite dimension. 
When discussing many-copy state transformations, we assume that each $n$-copy space of quantum states has its own associated set of \deff{free states} $\FF_n$. We often use $\FF$ to refer to the whole family $(\FF_n)_n$ for simplicity.

Following~\cite{brandao_2010-1,brandao_2015}, we will assume the following basic axioms about the resource theory in consideration:
\begin{enumerate}[{Axiom }I.]
\item Each $\FF_n$ is convex and closed.
\item There exists a full-rank state $\sigma$ such that $\sigma^{\otimes n} \in \FF_n$ for all $n$.
\item The sets $\FF_n$ are closed under partial trace: if $\sigma \in \FF_{n+1}$, then $\Tr_{k} \sigma \in \FF_{n}$ for every $k \in \{1,\ldots,n+1\}$.
\item The sets $\FF_n$ are closed under tensor product: if $\sigma \in \FF_n$ and $\sigma' \in \FF_{m}$, then $\sigma \otimes \sigma' \in \FF_{n+m}$.
%\item Carthago delenda est.
\end{enumerate}
The motivation for these assumptions is so that the theory is well behaved asymptotically~\cite{brandao_2010-1,datta_2009-2,brandao_2015}.\footnote{In~\cite{brandao_2010-1,brandao_2015}, an additional axiom of permutation invariance was imposed. We do not need this here.} 
These undemanding conditions are obeyed by the vast majority of practically relevant resource theories. In addition, we will need one more assumption that was implicitly used in~\cite{brandao_2010-1} and later appeared explicitly in~\cite{brandao_2015}:
\begin{enumerate}[{Axiom }I.]
\setcounter{enumi}{4}
\item The regularised relative entropy $D^\infty_\FF(\rho) = \lim_{n\to\infty}\frac1n \min_{\sigma\in\FF_n}D(\rho^{\otimes n}\|\sigma)$ is non-zero for all states $\rho \notin \FF$.
\end{enumerate}
Once again, Axiom~V holds true for most resources encountered in practice, in particular for quantum entanglement~\cite{piani_2009-1}. However, there do exist theories in which $D^\infty_\FF$ vanishes for some (or even all) states $\rho \notin \FF$, e.g.\ the theory of asymmetry~\cite{gour_2009}. The standard reversibility conjectures do not apply to such theories~\cite{brandao_2015}.

We use CPTP to denote completely positive and trace-preserving maps (quantum channels), and CPTNI to denote completely positive and trace--non-increasing maps (probabilistic quantum operations). A \deff{probabilistic protocol} is a collection of CPTNI maps $\{\E^{(i)}\}_i$ that forms a valid quantum instrument, i.e.\ the overall transformation $\sum_i \E^{(i)}$ is trace preserving. The latter condition simply means that the outcome probabilities necessarily add up to one. Given some class of \deff{free operations} $\OO$, we say that a protocol is free if $\E^{(i)} \in \OO \; \forall i$. When discussing transformations between two fixed states $\rho \to \omega$, we may coarse-grain the outcomes of the instrument and consider only two-outcome protocols $\{\E^{(i)}\}_{i \in \{0,1\}}$ --- either the protocol is successful in performing the given transformation, or it fails. We can then say that $\rho$ can be converted to $\omega$ probabilistically if there exists a free protocol $\{\E^{(i)}\}_{i \in \{0,1\}}$ such that $\E^{(0)} (\rho) \propto \omega$.

\deff{Resource non-generating (RNG) operations} are the maps which can never transform a resourceless state into a resourceful one:
\begin{equation}\begin{aligned}
  \OO_\RNG \coloneqq \lsetr \E \in \CPTNI \barr \frac{\E(\sigma)}{\Tr \E(\sigma)} \in \FF \; \forall \sigma \in \FF \rsetr.
\end{aligned}\end{equation}
Their approximate variant is defined as
\begin{equation}\begin{aligned}
  \OO_{\RNG,\delta} \coloneqq \lsetr \E \in \CPTNI \barr R^g_\FF\left(\frac{\E(\sigma)}{\Tr \E(\sigma)}\right) \leq \delta\; \forall \sigma \in \FF \rsetr.
\end{aligned}\end{equation}
We note here that the latter definition depends on the choice of the measure with which one quantifies the generated resources --- in this case, this is the \deff{generalised robustness}~\cite{vidal_1999}
\begin{equation}\begin{aligned}
  R^g_\FF(\rho) &\coloneqq  \inf \lsetr \lambda \in \RR_+ \barr \frac{\rho + \lambda \omega}{1+\lambda} \in \FF,\; \omega \in \DD \rsetr\\
  &\hphantom{:}= \inf \lsetr \lambda \in \RR_+ \barr \rho \leq (\lambda+1) \sigma,\; \sigma \in \FF \rsetr.
\end{aligned}\end{equation}
This quantity can also be identified with the max-relative entropy of a resource~\cite{datta_2009-2}, $D_{\max,\FF}(\rho) \coloneqq \log \left( 1 + R^g_\FF(\rho)\right)$. 
Other choices of measures may lead to completely different asymptotic behaviour~\cite{lami_2023}, and our choice here is motivated by the framework of~\cite{brandao_2008-1,brandao_2010} that conjectured reversibility of quantum resources under operations defined in this way.

Under resource non-generating transformations $\OO_\RNG$ (or $\OO_{\RNG,\delta}$), the existence of a probabilistic protocol that performs a transformation $\rho \to \omega$ is fully equivalent to the existence of a map $\E \in \OO_\RNG$ such that $\E(\rho)/\Tr \E(\rho) = \omega$. This is because any such map can always be completed to a free instrument: we simply define $\E'(X) \coloneqq \left[\Tr X - \Tr \E(X)\right] \sigma$ for some $\sigma \in \FF$, yielding $(\E, \E') \in \OO$. For other sets of operations, it may be the case that a map $\E \in \OO$ cannot always be completed to a free instrument $(\E, \E')$ --- e.g.\ for separable operations in entanglement theory~\cite{theurer_2017} or stabiliser operations in the theory of magic states~\cite{seddon_2022} --- so more care may need to be taken when discussing probabilistic transformations.

\subsection{Transformation rates}\label{sec:rates}

Given a class of completely positive and trace non-increasing maps $\OO$, the various transformation rates that we study are defined as follows. 

The \deff{deterministic conversion rate} is
\begin{equation}\begin{aligned} \label{eq:def_rate}
        r_{p=1}(\rho \toO \omega) \coloneqq \sup_{(\E_n)_n}  \Bigg\{ \,r \;\Bigg|\; & \lim_{n\to\infty} F\!\left( \E_n(\rho^{\otimes n}), \, \omega^{\otimes \floor{rn}} \right) = 1,\quad \E_n \in \OO \cap {\rm CPTP} \Bigg\},
\end{aligned}\end{equation}
where the optimisation is over all sequences $(\E_n)_n$ of maps satisfying the given constraints.
Here, $F(\rho,\sigma) \coloneqq \norm{\sqrt{\rho}\sqrt{\vphantom{\rho}\sigma}}{1}^2$ is the fidelity. 

The \deff{probabilistic rate with non-vanishing probability of success} is
\begin{equation}\begin{aligned} \label{eq:def_prob_rate}
    r_{p>0}(\rho \toO \omega) \coloneqq \sup_{(\E_n)_n}  \Bigg\{ \,r \;\Bigg|\; & \lim_{n\to\infty} F\!\left( \frac{\E_n(\rho^{\otimes n})}{\Tr \E_n(\rho^{\otimes n})} ,\, \omega^{\otimes \floor{rn}} \right) = 1,\\
     & \E_n \in \OO,\quad \liminf_{n\to\infty}\, \Tr \E_n(\rho^{\otimes n}) > 0 \Bigg\}.
\end{aligned}\end{equation}
We refer to $1 - F\!\left( \frac{\E_n(\rho^{\otimes n})}{\Tr \E_n(\rho^{\otimes n})} ,\, \omega^{\otimes \floor{rn}} \right)$ as the \deff{transformation error} and to $\Tr \E_n(\rho^{\otimes n})$ as the \deff{transformation probability}. 

The \deff{deterministic strong converse rate} is
\begin{equation}\begin{aligned}\label{eq:def_sc}
    r_{p=1}^\dagger(\rho \toO \omega) &\coloneqq \sup_{(\E_n)_n} 
    \Bigg\{ \,r \;\Bigg|\; \liminf_{n\to\infty}\,  F\!\left( \E_n(\rho^{\otimes n}) ,\, \omega^{\otimes \floor{rn}} \right) > 0,\quad \E_n \in \OO \cap {\rm CPTP} \Bigg\}.\\
    &\hphantom{:}= \inf \Bigg\{ \,r' \;\Bigg|\;  \liminf_{n\to\infty} \sup_{\E_n\in \OO\, \cap\, {\rm CPTP}} F\!\left( \E_n(\rho^{\otimes n}),\, \omega^{\otimes \floor{r'n}} \right) = 0
    \Bigg\}.
\end{aligned}\end{equation}
We note here that there are differences in the precise definition of a strong converse in the literature, and stronger definitions can sometimes be encountered. One such variant is
\begin{equation}\begin{aligned}\label{eq:def_sc_stronger}
    r_{p=1}^\ddagger(\rho \toO \omega) &\coloneqq \sup_{(\E_n)_n}
    \Bigg\{ \,r \;\Bigg|\; \limsup_{n\to\infty}\,  F\!\left( \E_n(\rho^{\otimes n}) ,\, \omega^{\otimes \floor{rn}} \right) > 0, \quad \E_n \in \OO \cap {\rm CPTP} \Bigg\}\\
    &\hphantom{:}= \inf \Bigg\{ \,r' \;\Bigg|\;  \lim_{n\to\infty} \sup_{\E_n\in \OO \,\cap\, {\rm CPTP}} F\!\left( \E_n(\rho^{\otimes n}),\, \omega^{\otimes \floor{r'n}} \right) = 0 
    \Bigg\}.
\end{aligned}\end{equation}
The difference between $r^\dagger$ and $r^\ddagger$ is that the former maximises over rates which are truly achievable, albeit with a large error; in other words, for all sufficiently large $n$, the fidelity with the target state must be larger than some non-zero constant. On the other hand, the maximisation in~\eqref{eq:def_sc_stronger} only requires that the fidelity be non-zero infinitely often.
It can be noticed that $r_{p=1}^\dagger \leq r_{p=1}^\ddagger$ in general, although we do not know if this inequality can ever be strict when considering i.i.d.\ state transformations as we do here.
We choose to employ the definition of $r^\dagger$ as in Eq.~\eqref{eq:def_sc} as it will lead to tighter statements of our results. On the other hand, we will sometimes use $r^\ddagger$ to provide potentially stronger upper (converse) bounds on rates. A reader not interested in these technicalities may safely ignore the difference between the two definitions.

For completeness, we will also briefly consider the \deff{strong converse probabilistic rate}
\begin{equation}\begin{aligned} \label{eq:def_prob_sc}
r^\dagger_{p>0}(\rho \toO \omega) \coloneqq \sup_{(\E_n)_n}
\Bigg\{ \,r \;\Bigg|\; & \liminf_{n\to\infty} \, F\!\left( \frac{\E_n(\rho^{\otimes n})}{\Tr \E_n(\rho^{\otimes n})} ,\, \omega^{\otimes \floor{rn}} \right) > 0,\\
& \E_n \in \OO,\quad \liminf_{n\to\infty}\, \Tr \E_n(\rho^{\otimes n}) > 0 \Bigg\} ,
\end{aligned}\end{equation}
together with its variant
\bb \label{eq:def_prob_sc_stronger}
r_{p>0}^\ddagger(\rho \toO \omega) \coloneqq \sup_{(\E_n)_n}
    \Bigg\{ \,r \;\Bigg|\; &\limsup_{n\to\infty}\,  F\!\left( \frac{\E_n(\rho^{\otimes n})}{\Tr \E_n(\rho^{\otimes n})} ,\, \omega^{\otimes \floor{rn}} \right) > 0, \\ 
    &\E_n \in \OO,\quad \liminf_{n\to\infty}\, \Tr \E_n(\rho^{\otimes n}) > 0 \Bigg\} ,
\ee
where the only difference between Eq.~\eqref{eq:def_prob_sc} and Eq.~\eqref{eq:def_prob_sc_stronger}, just like for Eq.~\eqref{eq:def_sc} and Eq.~\eqref{eq:def_sc_stronger}, is the presence of a $\liminf$ or a $\limsup$ over the achievable transformation fidelities.

\setlength{\arrayrulewidth}{0.2mm}
\setlength{\tabcolsep}{9pt}
\renewcommand{\arraystretch}{1.8}

\begin{table}
    \centering
    \begin{tabular}{c|c|c|c} 
    & Standard & Strong converse $\dag$ & Stronger strong converse $\ddagger$ \\ \hline
    Deterministic & $r_{p=1}(\rho \toO \omega)$~~\eqref{eq:def_rate} & $r_{p=1}^\dagger(\rho \toO \omega)$~~\eqref{eq:def_sc} & $r_{p=1}^\ddagger(\rho \toO \omega)$~~\eqref{eq:def_sc_stronger} \\
    Probabilistic & $r_{p>0}(\rho \toO \omega)$~~\eqref{eq:def_prob_rate} & $r^\dagger_{p>0}(\rho \toO \omega)$~~\eqref{eq:def_prob_sc} & $r^\ddagger_{p>0}(\rho \toO \omega)$~~\eqref{eq:def_prob_sc_stronger}
    \end{tabular}
    \caption{The transformation rates defined by Eqs.~\eqref{eq:def_rate}--\eqref{eq:def_prob_sc_stronger}.}
\end{table}

Finally, our main object of study are probabilistic transformation rates under \deff{asymptotically resource--non-generating transformations} $\OO_{\rm ARNG}$, which are sequences of operations in $\OO_{\RNG,\delta_n}$ such that $\delta_n \tends 0$. Specifically,
\begin{equation}\begin{aligned}\label{eq:def_rate_arng}
    r_{p>0}(\rho \toARNG \omega) \coloneqq \sup_{(\E_n)_n}  \Bigg\{ \,r \;\Bigg|\; & \lim_{n\to\infty} F\!\left( \frac{\E_n(\rho^{\otimes n})}{\Tr \E_n(\rho^{\otimes n})} ,\, \omega^{\otimes \floor{rn}} \right) = 1,\\
    & \lim_{n\to\infty} \,\sup_{\sigma \in \FF_n} \,R^g_{\FF_{\floor{rn}}} \!\left(\frac{\E_n(\sigma)}{\Tr \E_n(\sigma)}\right)  = 0,\quad \liminf_{n\to\infty}\, \Tr \E_n(\rho^{\otimes n}) > 0 \Bigg\}.
\end{aligned}\end{equation}

When employing the above definitions, we will make use of the fact that, due to the Fuchs--van de Graaf inequalities~\cite{fuchs_1999}, the transformation error can be equivalently defined with the trace distance~$\frac12\norm{\rho-\sigma}{1}$:
\begin{equation}\begin{aligned}
  \lim_{n\to\infty} \, F\!\left( \frac{\E_n(\rho^{\otimes n})}{\Tr \E_n(\rho^{\otimes n})}, \, \omega^{\otimes \floor{rn}} \right) = 1 \; &\iff \; \lim_{n\to\infty}\, \frac12 \norm{ \frac{\E_n(\rho^{\otimes n})}{\Tr \E_n(\rho^{\otimes n})} - \omega^{\otimes \floor{rn}}}{1} = 0,\\
  \liminf_{n\to\infty} \,F\!\left( \frac{\E_n(\rho^{\otimes n})}{\Tr \E_n(\rho^{\otimes n})}, \, \omega^{\otimes \floor{rn}} \right) > 0 \; &\iff \; \limsup_{n\to\infty} \frac12 \norm{ \frac{\E_n(\rho^{\otimes n})}{\Tr \E_n(\rho^{\otimes n})} - \omega^{\otimes \floor{rn}}}{1} < 1.
\end{aligned}\end{equation}

The relations between the different transformation rates are encapsulated in the following lemma.
\begin{boxed}
\begin{lemma}\label{lem:hierarchy}
Let $\OO$ be any class of free operations such that, for an instrument $\{\E^{(i)}\}_i$ with $\E^{(i)} \in \OO$, it holds that $\sum_i \E^{(i)} \in \OO$. This can be, for instance, the classes $\OO_{\rm RNG}$ or $\OO_{\rm RNG, \delta}$ in any convex resource theory, or any suitable class of free operations in entanglement theory such as LOCC.

Then, for all quantum states $\rho$ and $\omega$, the transformation rates satisfy
    \begin{equation}\begin{aligned}
        r_{p=1}(\rho \toO \omega) \,\leq\, r_{p > 0}(\rho \toO \omega) \,&\leq\, r^\dagger_{p=1}(\rho \toO \omega) \,=\, r^\dagger_{p>0}(\rho \toO \omega) \\
        &\leq\, r^\ddagger_{p=1}(\rho \toO \omega) \,=\, r^\ddagger_{p>0}(\rho \toO \omega).
    \end{aligned}\end{equation}
\end{lemma}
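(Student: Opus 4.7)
The plan is to reduce the chain of inequalities
\begin{equation}\begin{aligned}
r_{p=1}(\rho \toO \omega) \,\leq\, r_{p > 0}(\rho \toO \omega) \,\leq\, r^\dagger_{p=1}(\rho \toO \omega) \,=\, r^\dagger_{p>0}(\rho \toO \omega)
\end{aligned}\end{equation}
to a single nontrivial bound. The inequalities $r_{p=1} \leq r_{p>0}$ and $r^\dagger_{p=1} \leq r^\dagger_{p>0}$ are immediate from the definitions: any CPTP map in $\OO$ is in particular a CPTNI map in $\OO$ with $\Tr \E_n(\rho^{\otimes n}) = 1$, so the success-probability constraint $\liminf \Tr \E_n(\rho^{\otimes n}) > 0$ is automatically satisfied. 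Similarly $r_{p>0} \leq r^\dagger_{p>0}$ follows because the fidelity condition $\lim_n F = 1$ is strictly stronger than $\liminf_n F > 0$. Combining these three trivial inequalities with the equality $r^\dagger_{p=1} = r^\dagger_{p>0}$ gives the whole chain, so only $r^\dagger_{p>0}(\rho \toO \omega) \leq r^\dagger_{p=1}(\rho \toO \omega)$ remains to be shown.

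For this, I would argue as follows. Take any rate $r$ achievable for $r^\dagger_{p>0}$, i.e.\ let $(\E_n)_n \subseteq \OO$ with $p_n \coloneqq \Tr \E_n(\rho^{\otimes n})$ satisfy $\liminf_n p_n > 0$ and $\liminf_n F(\E_n(\rho^{\otimes n})/p_n,\, \omega^{\otimes \floor{rn}}) > 0$. By the instrument-completion discussion in Section~\ref{sec:axioms} (which holds for all classes of operations listed in the lemma), each $\E_n$ is the success branch of a two-outcome free instrument $\{\E_n, \E'_n\}$ with $\E'_n \in \OO$. The standing hypothesis on $\OO$ then forces the CPTP map $\F_n \coloneqq \E_n + \E'_n$ to lie in $\OO \cap \CPTP$. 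Decomposing $\F_n(\rho^{\otimes n}) = p_n \, \tau_n + (1-p_n) \, \tau'_n$ with $\tau_n \coloneqq \E_n(\rho^{\otimes n})/p_n$, the joint concavity of the root fidelity gives
\begin{equation}\begin{aligned}
\sqrt{F\!\left(\F_n(\rho^{\otimes n}),\, \omega^{\otimes \floor{rn}}\right)} \,\geq\, p_n \, \sqrt{F\!\left(\tau_n,\, \omega^{\otimes \floor{rn}}\right)},
\end{aligned}\end{equation}
and squaring yields $\liminf_n F(\F_n(\rho^{\otimes n}), \omega^{\otimes \floor{rn}}) \geq (\liminf_n p_n)^2 \cdot \liminf_n F(\tau_n, \omega^{\otimes \floor{rn}}) > 0$. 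Hence $r$ is achievable as a deterministic strong-converse rate via the sequence $(\F_n)_n$, completing the argument.

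I do not anticipate any serious obstacle. The one point requiring care is verifying that the failure branch $\E'_n$ can always be chosen in $\OO$: for $\OO_{\rm RNG}$ and $\OO_{\rm RNG,\delta}$ the explicit ``garbage'' completion $X \mapsto [\Tr X - \Tr \E_n(X)]\, \sigma$ with any $\sigma \in \FF_n$ produces a manifestly free map, while for LOCC it is automatic since probabilistic LOCC maps are by definition branches of LOCC instruments. Beyond this bookkeeping, the lemma reduces entirely to the joint concavity of the fidelity, which is a standard fact.
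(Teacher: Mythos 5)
Your proof is correct and follows essentially the same route as the paper's: complete each probabilistic branch to a free instrument, sum the branches to obtain a CPTP map in $\OO$, and lower-bound the fidelity of the resulting deterministic output via concavity, so that non-vanishing probability times non-vanishing fidelity certifies a feasible strong-converse rate. The only immaterial differences are organisational — you funnel everything through the single inequality $r^\dagger_{p>0} \leq r^\dagger_{p=1}$, whereas the paper proves $r_{p>0} \leq r^\dagger_{p=1}$ directly and then repeats the argument for $r^\dagger_{p>0}$ — and that you invoke concavity of $\sqrt{F}$, which after squaring gives the prefactor $p_n^2$ where the paper's use of concavity of $F$ in its first argument gives $p_n$; both suffice since only positivity of the limit inferior is needed.
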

\end{boxed}

\begin{remark}Let us comment briefly on the assumption of this lemma. If there exists a classical system $C$ with an orthonormal basis $\{\ket{i}_C\}_i$ such that each element of the basis is a free state, 
one can equivalently understand the realisation of any probabilistic protocol  $\{\E^{(i)}\}_i$ as a CPTP map
\begin{equation}\begin{aligned}
  \rho \mapsto \sum_i \E^{(i)}(\rho) \otimes \proj{i}_C
\end{aligned}\end{equation}
where the classical system is used to keep track of the probabilistic outcome of the protocol. If we are free to `forget' this outcome --- that is, if tracing out the classical system is an allowed operation --- then the given set $\OO$ satisfies the conditions of the lemma. This is a very weak assumption that is satisfied by essentially all types of free operations in the majority of resource theories, but there are special cases in which it may not be obeyed, e.g.\ for thermal operations in the theory of thermodynamics~\cite{alhambra_2016}.
\end{remark}

\begin{proof}
    As deterministic protocols are a special case of probabilistic ones, the first inequality is immediate. For the second inequality, assume that there exists a sequence of free probabilistic operations $(\E_n)_n$ which performs the transformation $\rho \to \omega$ at a rate $r$ with error $\ve_n$ such that $\lim_{n\to\infty}\ve_n = 0$ and probability $p_n$ such that $\liminf_{n\to\infty} p_n = p > 0$. 
    By definition of a probabilistic protocol, there exist operations $\E'_n \in \OO$ which complete each $\E_n$ to a valid quantum instrument, that is, such that
\begin{equation}\begin{aligned}
    \D_n(X) \coloneqq \E_n(X) + \E'_n(X)
\end{aligned}\end{equation}
is trace preserving. (When $\OO$ denotes resource non-generating operations, we can simply take $\E'_n(X) = \left[\Tr X - \Tr \E_n(X)\right] \sigma_n$ for some $\sigma_n \in \FF$.)\; By hypothesis, we have that $\D_n \in \OO$. 

Since $p_n = \Tr \E_n(\rho^{\otimes n})$, the concavity of the fidelity then gives
\begin{equation}\begin{aligned}\label{eq:fidelity_prob_combined}
    \liminf_{n\to\infty} \, F\!\left(\D_n(\rho^{\otimes n}), \omega^{\otimes \floor{rn}}\right) &\geq \liminf_{n\to\infty} \, p_n \, F\! \left( \frac{\E_n(\rho^{\otimes n})}{p_n},\, \omega^{\otimes \floor{rn}} \right)\\
    &= \liminf_{n\to\infty} \, p_n (1-\ve_n)\\
    &= p > 0,
\end{aligned}\end{equation}
hence $r \leq r^\dagger_{p=1}(\rho \to \omega)$.

Let now $r$ be any feasible rate for $r^\dagger_{p>0}(\rho \toO \omega)$, that is, such that there exists a sequence of free probabilistic operations $(\E_n)_n$ with non-vanishing error $\limsup_{n\to\infty} \ve_n = \ve < 1$ and probability of success $\liminf_{n\to\infty} p_n = p > 0$. The same argument as above can be used to construct a deterministic protocol with
\begin{equation}\begin{aligned}
  \liminf_{n\to\infty}\, F\!\left(\D_n(\rho^{\otimes n}), \omega^{\otimes \floor{rn}}\right) &\geq \liminf_{n\to\infty} \, p_n (1-\ve_n) \geq p (1-\ve) > 0 ,
\end{aligned}\end{equation}
yielding $r \leq r^\dagger_{p=1}(\rho \toO \omega)$ and hence $r^\dagger_{p>0} \leq r^\dagger_{p=1}$. Since the opposite inequality between the strong converse rates follows straightforwardly from the fact that any deterministic protocol is a special case of a probabilistic one, the two rates must be equal.

The inequality $r^\dagger_{p>0} \leq r^\ddagger_{p>0}$ follows immediately by comparing~\eqref{eq:def_prob_sc} and~\eqref{eq:def_prob_sc_stronger}. The equivalence between the stronger strong converse rates $r^\ddagger_{p>0}$ and $r^\ddagger_{p=1}$ proceeds analogously, where we now use that
\begin{equation}\begin{aligned}
  \limsup_{n\to\infty}\, F\!\left(\D_n(\rho^{\otimes n}), \omega^{\otimes \floor{rn}}\right) &\geq \liminf_{n\to\infty} \, p_n \, \limsup_{n\to\infty} (1-\ve_n) \geq p (1-\ve) > 0 .
\end{aligned}\end{equation}
This completes the proof.
\end{proof}

%%%%%%%%%%%%%%%%%%%%%%%%%%%%%%%%%%%%%%%%%%%%%%%%%%%%%%%%%%%%%%%%%%%%%%%%%%%%%%%%%%%%%%%%

\section{Probabilistic reversibility of general quantum resources}\label{sec:app_rev}

{
\renewcommand{\thetheorem}{1}
\begin{boxed}[filled]
\begin{theorem} 
\label{thm:reversibility-s}
In any resource theory satisfying axioms I--V of Sec.~\ref{sec:axioms}, it holds that
\begin{equation}\begin{aligned}
    r_{p>0}(\rho \toARNG \omega) = \frac{D^\infty_\FF(\rho)}{D^\infty_\FF(\omega)}.
\end{aligned}\end{equation}
\end{theorem}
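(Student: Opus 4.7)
The plan is to prove the two inequalities $r_{p>0}(\rho \toARNG \omega) \leq D^\infty_\FF(\rho)/D^\infty_\FF(\omega)$ and $r_{p>0}(\rho \toARNG \omega) \geq D^\infty_\FF(\rho)/D^\infty_\FF(\omega)$ separately. Both hinge on the asymptotic equipartition property~\eqref{eq:aep}, but they exploit different features of the generalised robustness $R^g_\FF$: the converse uses its strong monotonicity under probabilistic ARNG operations, while the achievability uses the strong converse of the generalised quantum Stein's lemma, which remains valid.

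For the converse, I would first derive a probabilistic one-shot monotonicity bound: for $\E \in \OO_{\RNG,\delta}$ and a state $\tau$ with $p = \Tr\E(\tau) > 0$, writing $\tau \leq (1+R^g_\FF(\tau))\sigma$ with $\sigma \in \FF$ and then invoking the $\delta$-RNG condition on $\E(\sigma)/\Tr\E(\sigma)$ gives $p\,(1+R^g_\FF(\E(\tau)/p)) \leq (1+\delta)(1+R^g_\FF(\tau))$. Applying this to $\rho^{\otimes n}$ directly is useless since $\tfrac{1}{n}\log(1+R^g_\FF(\rho^{\otimes n}))$ converges to the regularised max-relative entropy rather than to $D^\infty_\FF(\rho)$. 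The remedy is to smooth the input: for a protocol $(\E_n)$ with rate $r$, error $\ve_n\to 0$, and $\liminf_n p_n \geq p > 0$, pick an AEP-optimal state $\tilde\rho_n$ with $\tfrac{1}{2}\norm{\tilde\rho_n - \rho^{\otimes n}}{1} \leq \ve$ and $\log(1+R^g_\FF(\tilde\rho_n)) \leq n(D^\infty_\FF(\rho) + o_n(1))$, then apply the monotonicity to $\tilde\rho_n$. Contractivity of the trace norm under CPTNI maps, combined with the lower bound on $p_n$, guarantees that $\E_n(\tilde\rho_n)/\Tr\E_n(\tilde\rho_n)$ stays within $O(\ve/p)$ in trace distance of $\omega^{\otimes \floor{rn}}$, so its robustness lower-bounds the correspondingly smoothed robustness of $\omega^{\otimes \floor{rn}}$. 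Taking logarithms in the monotonicity bound, dividing by $n$, and applying AEP on both sides (finally sending $\ve \to 0$) yields $r\,D^\infty_\FF(\omega) \leq D^\infty_\FF(\rho)$.

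For the achievability, the strong converse of the generalised quantum Stein's lemma furnishes, for every $\eta > 0$, measurement operators $0 \leq A_n \leq \id$ with $\liminf_n \Tr(A_n\rho^{\otimes n}) \geq c > 0$ and $\sup_{\sigma \in \FF_n}\Tr(A_n\sigma) \leq e^{-n(D^\infty_\FF(\rho)-\eta)}$. Fix a small $\eta' > 0$ and set $r' = (1-\eta')D^\infty_\FF(\rho)/D^\infty_\FF(\omega)$, so that $r'D^\infty_\FF(\omega) = (1-\eta')D^\infty_\FF(\rho)$. By AEP, there exist $\omega_n$ with $\tfrac{1}{2}\norm{\omega_n - \omega^{\otimes \floor{r'n}}}{1} \to 0$ and $\log(1+R^g_\FF(\omega_n)) \leq n(1-\eta')D^\infty_\FF(\rho)(1+o(1))$. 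For any $\pi_n \in \FF_{\floor{r'n}}$ and any $\mu_n \in (0,1]$, consider
\begin{equation}
\E_n(\tau) = \Tr(A_n\tau)\,\omega_n + \mu_n\,\Tr((\id-A_n)\tau)\,\pi_n.
\end{equation}
On any free $\sigma \in \FF_n$ the renormalised output is a convex combination of $\omega_n$ and $\pi_n$ with weight on $\omega_n$ at most $\Tr(A_n\sigma)/\mu_n$; convexity of $R^g_\FF$ together with $R^g_\FF(\pi_n)=0$ then give $R^g_\FF(\E_n(\sigma)/\Tr\E_n(\sigma)) \leq \mu_n^{-1}\,\Tr(A_n\sigma)\,R^g_\FF(\omega_n) \leq \mu_n^{-1}\,e^{-n\eta' D^\infty_\FF(\rho)(1-o(1))}$. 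Meanwhile, $\Tr\E_n(\rho^{\otimes n}) \geq \Tr(A_n\rho^{\otimes n}) \geq c$, and the renormalised output on $\rho^{\otimes n}$ is a convex combination of $\omega_n$ and $\pi_n$ that converges to $\omega^{\otimes \floor{r'n}}$ in trace distance as $\mu_n \to 0$. Choosing $\mu_n = e^{-n\eta'D^\infty_\FF(\rho)/2}$ simultaneously sends $\mu_n \to 0$, drives the $\delta_n$-RNG bound to zero, and preserves a non-vanishing probability; so $r'$ is achievable, and letting $\eta' \to 0$ gives the desired bound.

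The main technical obstacle is the delicate balancing of $\mu_n$ in the achievability: it must go to zero (so the renormalised output approaches $\omega_n$) yet remain exponentially large on a finer scale (so that the residual resource generated on the free-state branch vanishes), and all the while the success probability must stay bounded away from zero. This is only possible because working strictly below the target rate creates an exponential gap between $\sup_\sigma\Tr(A_n\sigma)$ and $1/R^g_\FF(\omega_n)$ into which $\mu_n$ can fit; at rate exactly $r$ the two coincide in scale and no such $\mu_n$ exists. On the converse side, the analogous subtle point is that $p_n$ bounded away from zero is what allows the input smoothing to propagate cleanly to an output bound, distinguishing the $p>0$ regime from the general postselected one.
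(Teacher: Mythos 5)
Your proposal is correct and follows essentially the same route as the paper: the converse via a smoothed, probabilistic monotonicity bound for $D_{\max,\FF}$ under $\OO_{\RNG,\delta}$ maps combined with the asymptotic equipartition property, and the achievability via the two-branch map $\E_n(\tau) = \Tr (A_n \tau)\, \omega_n + \mu_n \Tr[(\id - A_n)\tau]\,\pi_n$ built from the strong-converse Stein measurements, with $\mu_n$ tuned to decay in the exponential gap between $\sup_{\sigma}\Tr(A_n\sigma)$ and $1/R^g_\FF(\omega_n)$. The only (harmless) deviation is in certifying that $\E_n$ is asymptotically resource non-generating: the paper takes $\pi_n$ to be the specific robustness witness of $\omega_n$ and argues via an operator inequality, whereas you take an arbitrary free $\pi_n$ and use convexity of $R^g_\FF$ --- both yield the required vanishing bound.
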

\end{boxed}
}

\begin{proof}
\parait{Converse} 
We will often use the logarithmic variant of $R^g_\FF$, for which we recall the standard notation
\begin{equation}\begin{aligned}
D_{\max,\FF}(\rho) = \log \left( 1 + R^g_\FF(\rho)\right).
\end{aligned}\end{equation}
Assume that $r$ is an achievable probabilistic rate, that is, there exists a sequence of protocols $(\E_n)_n$ with $\E_n \in \OO_{\RNGdeltan}$ such that $\frac12 \norm{\frac{\E_n(\rho^{\otimes n})}{\Tr \E_n(\rho^{\otimes n})} - \omega^{\otimes \floor{rn}}}{1} \eqqcolon \ve_n$ with $\lim_{n\to\infty} \ve_n = 0$, $\liminf_{n\to \infty} \Tr \E_n(\rho^{\otimes n}) \eqqcolon p > 0$, and furthermore $\lim_{n\to\infty} \delta_n = 0$.
For all $\zeta<p$ and all sufficiently large $n$, it then holds that $\zeta/\Tr \E_n(\rho^{\otimes n}) + \ve_n < 1$. 
Letting
\begin{equation}\begin{aligned}
   D^\zeta_{\max,\FF} (\rho) \coloneqq \min_{\frac12 \norm{\rho - \rho'}{1} \leq \zeta} D_{\max,\FF}(\rho'),
\end{aligned}\end{equation}
we use the probabilistic monotonicity of the max-relative entropy (Lemma~\ref{lem:dmax} below) to get 
\begin{equation}\begin{aligned}
   D^\zeta_{\max,\FF}(\rho^{\otimes n}) &\geq D^{\zeta/\Tr \E_n(\rho^{\otimes n})}_{\max,\FF}\left(\frac{\E_n(\rho^{\otimes n})}{\Tr \E_n(\rho^{\otimes n})}\right) + \log \left( \Tr \E_n(\rho^{\otimes n}) - \zeta\right) - \log (1+\delta_n)\\
&\geq D^{\zeta/\Tr \E_n(\rho^{\otimes n})+\ve_n}_{\max,\FF}\left(\omega^{\otimes \floor{rn}}\right) + \log \left( \Tr \E_n(\rho^{\otimes n}) - \zeta\right) - \log (1+\delta_n).
\end{aligned}\end{equation}
Here, in the last line we used the fact that $\frac12\norm{\tau - \omega^{\otimes \floor{rn}}}{1} \leq \frac12\norm{\tau - \frac{\E_n(\rho^{\otimes n})}{\Tr \E_n(\rho^{\otimes n})}}{1} + \frac12 \norm{\frac{\E_n(\rho^{\otimes n})}{\Tr \E_n(\rho^{\otimes n})} - \omega^{\otimes \floor{rn}}}{1}$ for any state $\tau$.

Applying the above, we obtain that
\begin{equation}\begin{aligned}
    \inf_{(\zeta_n)_n} \lset \limsup_{n\to\infty} \frac1n D^{\zeta_n}_{\max,\FF}(\rho^{\otimes n}) \bar \lim_{n\to\infty} \zeta_n = 0 \rset 
    &\geq \inf_{(\zeta_n)_n} \lset \limsup_{n\to\infty} \frac1n D^{\zeta_n/\Tr \E_n(\rho^{\otimes n})+\ve_n}_{\max,\FF}\left(\omega^{\otimes \floor{rn}}\right) \bar \lim_{n\to\infty} \zeta_n = 0 \rset\\
&\geq \inf_{(\eta_n)_n} \lset \limsup_{n\to\infty} \frac1n D^{\eta_n}_{\max,\FF}\left(\omega^{\otimes \floor{rn}}\right) \bar \lim_{n\to\infty} \eta_n = 0\rset,
\label{eq:converse_mother_inequality}
\end{aligned}\end{equation}
where in the third line we observed that $\zeta_n/\Tr \E_n(\rho^{\otimes n}) + \ve_n$ tends to $0$ for any sequence $(\zeta_n)_n$ tending to $0$. 
Using the asymptotic equipartition property~\cite{brandao_2010-1,datta_2009-2}
\begin{equation}\begin{aligned}
    \inf_{(\zeta_n)_n} \lset \limsup_{n\to\infty} \frac1n D^{\zeta_n}_{\max,\FF}(\omega^{\otimes \floor{rn}}) \bar \lim_{n\to\infty} \zeta_n = 0 \rset = r\, D^\infty_\FF(\omega)
\end{aligned}\end{equation}
on both sides of~\eqref{eq:converse_mother_inequality} concludes the converse part of the Theorem. 

\parait{Achievability} From the results of Brand\~ao--Plenio~\cite{brandao_2010-1}, and in particular from the aforementioned equipartition property, we know two things:
\begin{enumerate}[1)]
\item (\cite[Proposition~II.1 and Corollary~III.2]{brandao_2010-1}; also~\cite[Theorem~1]{datta_2009-2} ) For any $\omega$, any $r > 0$ and any $c > D^\infty_\FF(\omega)$, there exists a sequence of states $(\omega_n)_n$ such that $F\!\left(\omega_n,\, \omega^{\otimes \floor{r n}}\right) \geq 1-\ve_n$ with $\lim_{n\to\infty} \ve_n = 0$ and such that
\begin{equation}\begin{aligned}\label{eq:stein_dmax_aep}
    \limsup_{n\to\infty} \frac1n \log \left( 1 + R^g_\FF(\omega_n) \right) = \limsup_{n\to\infty} \frac1n D_{\max,\FF}(\omega_n) = cr.
\end{aligned}\end{equation}

\item (\cite[Corollary~III.3; cf.\ also proof of Theorem I therein]{brandao_2010-1}) For any $\rho$ and any $d < D^\infty_\FF(\rho)$, there exists a sequence of POVM elements $(A_n)_n$ satisfying $0\leq A_n\leq \id$ for all $n$, and moreover $\Tr \left( A_n \rho^{\otimes n} \right) \geq 1-\delta_n$, where the sequence $(\delta_n)_n$ is such that
\begin{equation}\begin{aligned}
\limsup_{n\to\infty} \delta_n = \delta < 1
\end{aligned}\end{equation}
and\footnote{The proof of~\cite{brandao_2010-1} actually makes the stronger statement that we can take $- \frac1n \log \sup_{\sigma \in \FF_n} \Tr( A_n  \sigma) = d$ for all $n$. We do not need this here, hence we leave the statement in the above more general form.}
\begin{equation}\begin{aligned}\label{eq:stein_achiev}
    \liminf_{n\to\infty} - \frac1n \log \sup_{\sigma \in \FF_n} \Tr( A_n  \sigma) = d.
\end{aligned}\end{equation}
\end{enumerate}

We define the shorthands
\begin{equation}\begin{aligned}
    a_n &\coloneqq \sup_{\sigma \in \FF_n} \Tr (A_n \sigma),\\
    \lambda_n &\coloneqq 1 + R^g_\FF(\omega_n).
    \label{eq:shorthands}
\end{aligned}\end{equation}

Let us use the above to construct a probabilistic protocol which transforms $n$ copies of $\rho$ to $\floor{rn}$ copies of $\omega$ for some $r$. We start by defining the sequence of maps
\begin{equation}\begin{aligned}
    \E_n(X) \coloneqq \Tr (A_n X)\, \omega_n + \mu_n  \Tr [ (\id - A_n) X ]\, \pi_n,
\end{aligned}\end{equation}
where $\mu_n \in [0,1]$ are some parameters to be chosen later, $\omega_n$ are the states appearing in~\eqref{eq:stein_dmax_aep}, and $\pi_n$ are states such that
\begin{equation}\begin{aligned}\label{eq:pi_state}
   \frac{ \omega_n + (\lambda_n - 1) \pi_n }{ \lambda_n } = \sigma_n \in \FF_{\floor{rn}},
\end{aligned}\end{equation}
which exist by definition of $R^g_\FF$. The maps are clearly completely positive and trace non-increasing by construction.

We will now show that, for any choice of a sequence $(\mu_n)_n$ such that
\begin{equation}\begin{aligned}\label{eq:mu_condition}
    \mu_n \geq \frac{\lambda_n - 1}{a_n^{-1} - 1}%,
\end{aligned}\end{equation}
for all $n$, the sequence of maps %are 
$(\E_n)_n$ is asymptotically resource non-generating.
This can be seen by observing that, for any $\sigma \in \FF_n$, it holds that
\begin{equation}\begin{aligned}
    \E_n(\sigma) &= \Tr (A_n \sigma)\, \omega_n + \mu_n (1 - \Tr (A_n \sigma)\,) \pi_n\\
    &= \Tr (A_n \sigma)\, \lambda_n \frac{ \omega_n + (\lambda_n - 1) \pi_n }{ \lambda_n } + \big[ \mu_n \big( 1 - \Tr (A_n \sigma) \big) - \Tr (A_n \sigma)\, ( \lambda_n - 1 ) \big] \pi_n\\
    &\textleq{(i)} \Tr (A_n \sigma)\, \lambda_n \, \sigma_n + \big[ \mu_n \big( 1 - \Tr (A_n \sigma) \big) - \Tr (A_n \sigma)\, ( \lambda_n - 1 ) \big] \frac{\lambda_n}{\lambda_n - 1} \sigma_n\\
    &\textleq{(ii)} \Tr (A_n \sigma)\, \lambda_n  %\frac{\lambda_m}{\lambda_m - 1}
    \frac{\lambda_n}{\lambda_n-1}\, \sigma_n + \big[ \mu_n \big( 1 - \Tr (A_n \sigma)\big) - \Tr (A_n \sigma)\, ( \lambda_n - 1 ) \big] \frac{\lambda_n}{\lambda_n - 1} \sigma_n\\
    &= \big[ \mu_n \big(1 - \Tr(A_n \sigma)\big) + \Tr(A_n \sigma) \big] \frac{\lambda_n}{\lambda_n - 1} \sigma_n\\
    &= \big[\!\Tr \E_n(\sigma)\big]\, \frac{\lambda_n}{\lambda_n - 1} \sigma_n.
\end{aligned}\end{equation}
Here, in (i) we used the fact that $\pi_n \leq \frac{\lambda_n}{\lambda_n - 1} \sigma_n$, which follows from Eq.~\eqref{eq:pi_state}, as well as that $\mu_n \big( 1 - \Tr (A_n \sigma)\big) - \Tr (A_n \sigma)\, ( \lambda_n - 1 ) \geq 0$, which is due to the assumption in Eq.~\eqref{eq:mu_condition}. In (ii), we simply used that $\frac{\lambda_n}{\lambda_n - 1} \geq 1$.
The above guarantees that
\begin{equation}\begin{aligned}
    \lim_{n\to\infty} \, R^g_\FF\left(\frac{\E_n(\sigma)}{\Tr \E_n(\sigma)} \right) \leq %\lim_{n\to\infty} \frac{1}{\lambda_n - 1} = 0.
    \lim_{n\to\infty} \left( \frac{\lambda_n}{\lambda_n-1}-1\right) = 0,
\end{aligned}\end{equation}
where the last identity follows from the observation that $\lambda_n$ diverges exponentially, as is apparent from~\eqref{eq:stein_dmax_aep} and~\eqref{eq:shorthands}.

Crucially, the probability of success of these operations satisfies
\begin{equation}\begin{aligned}
    p_n = \Tr \E_n(\rho^{\otimes n}) &= \Tr \left(A_n \rho^{\otimes n}\right) + \mu_n \left(1 - \Tr \left(A_n \rho^{\otimes n}\right) \right)\\
    &\geq 1 - \delta_n (1-\mu_n),
\end{aligned}\end{equation}
while the transformation fidelity is at least
\begin{equation}\begin{aligned}\label{eq:fidelity_prob_combined_proof}
    F\left(\frac{\E_n(\rho^{\otimes n})}{\Tr \E_n(\rho^{\otimes n})}, \omega^{\otimes \floor{rn}} \right) &\geq \frac{\Tr \left(A_n \rho^{\otimes n}\right)}{\Tr \E_n(\rho^{\otimes n})} \, F(\omega_n, \omega^{\otimes \floor{rn}})\\
    &\geq \frac{1-\delta_n}{1 - \delta_n + \mu_n\delta_n} (1-\ve_n),
\end{aligned}\end{equation}
where the first line follows by the concavity of the fidelity.
It follows that, by decreasing $\mu_n$, we can decrease the error in this transformation; on the other hand, the probability of success will never go below $1 - \delta > 0$ no matter how much $\mu_n$ is decreased, since $\delta_n \tends \delta < 1$.

Let us then fix
\vspace*{-.5\baselineskip}
\begin{equation}\begin{aligned}\label{eq:mu_choice}
    \mu_n \coloneqq \frac{\lambda_n - 1}{a_n^{-1} - 1}
\end{aligned}\vspace*{-.5\baselineskip}\end{equation}
and pick a rate
\vspace*{-.5\baselineskip}
\begin{equation}\begin{aligned}
    r = \frac{D^\infty_\FF(\rho)}{D^\infty_\FF(\omega)} -  \chi 
\end{aligned}\end{equation}
for some $\chi > 0$.
For all $\xi,\xi' > 0$ and for all large enough $n$, it then holds that
\begin{equation}\begin{aligned}
    \log \lambda_n &\leq r n\, D^\infty_\FF(\omega) + n \xi\\
    &\leq n\, D^\infty_\FF(\rho) - n \chi D^\infty_\FF(\rho) + n \xi 
\end{aligned}\end{equation}
and
\begin{equation}\begin{aligned}
    \log a_n^{-1} &\geq n\, D^\infty_\FF(\rho) - n \xi'.
\end{aligned}\end{equation}
Picking $\xi' = \xi = \chi D^\infty_\FF(\rho) / 4$ , we get
\begin{equation}\begin{aligned}
    \lim_{n\to\infty} \, \mu_n &= \lim_{n\to\infty} \, \frac{\lambda_n - 1}{a_n^{-1} - 1} \\
    &\leq \lim_{n\to\infty}\, \frac{2^{n D^\infty_\FF(\rho) \left( 1- \chi + \chi/4 \right)} - 1}{2^{n D^\infty_\FF(\rho) \left(1 - \chi/4 \right)} - 1}\\ 
    &= \lim_{n\to\infty}\, \frac{2^{n D^\infty_\FF(\rho) \left( 1- \chi/4 \right) - n D^\infty_\FF(\rho) \chi / 2 } - 1}{2^{n D^\infty_\FF(\rho) \left(1 - \chi/4 \right)} - 1}\\
    &= \lim_{n\to\infty}\,2^{- n D^\infty_\FF(\rho) \chi / 2 }\\
    &= 0.
\end{aligned}\end{equation}

Altogether, the probabilistic protocol $(\E_n)_n$ with the choice of $\mu_n$ as in Eq.~\eqref{eq:mu_choice} performs the transformation $\rho \to \omega$ at the rate $\frac{D^\infty_\FF(\rho)}{D^\infty_\FF(\omega)} - \chi$ with error satisfying
\begin{equation}\begin{aligned}
    \limsup_{n\to\infty} \, 1 - F\left(\frac{\E_n(\rho^{\otimes n})}{\Tr \E_n(\rho^{\otimes n})}, \omega^{\otimes \floor{rn}} \right) &\leq 1 - \liminf_{n\to\infty}\, \frac{1-\delta_n}{1 - \delta_n + \mu_n\delta_n} (1-\ve_n) \\
    &= 1 - \liminf_{n\to\infty} \left( 1 + \frac{\mu_n \delta_n}{1 - \delta_n} \right)^{-1} (1-\ve_n)\\
    &= 0
\end{aligned}\end{equation}
and probability of success
\begin{equation}\begin{aligned}
    \liminf_{n\to\infty} \, p_n &\geq 1 - \limsup_{n\to\infty} \, \delta_n( 1 - \mu_n)\\
    &= 1 - \delta\\
    &> 0.
\end{aligned}\end{equation}
Since $\chi$ was arbitrary, any rate below $\frac{D^\infty_\FF(\rho)}{D^\infty_\FF(\omega)}$ is thus achievable, and the result follows.
\end{proof}

\begin{boxed}
\begin{lemma}[Probabilistic monotonicity of max-relative entropy]
\label{lem:dmax} For any probabilistic operation $\E \in \OO_{\RNG,\delta}$ and any $\eta < \Tr \E(\rho)$, it holds that
 \begin{equation}\begin{aligned}
     D^\eta_{\max,\FF}(\rho) \geq  D^{\eta/\Tr \E(\rho)}_{\max,\FF}\left(\frac{\E(\rho)}{\Tr \E(\rho)}\right) + \log \left( \Tr \E(\rho) - \eta\right) - \log (1+\delta).
 \end{aligned}\end{equation}
 \end{lemma}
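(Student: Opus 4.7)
I would prove Lemma~\ref{lem:dmax} by lifting an optimal smoothing of $\rho$ through $\E$ to produce an explicit witness for the smoothed max-relative entropy on the right-hand side. Concretely, let $\rho'$ attain the minimum in $D^\eta_{\max,\FF}(\rho)$, so that $\frac{1}{2}\|\rho-\rho'\|_1\leq\eta$ and $\rho'\leq\lambda\sigma$ for some $\sigma\in\FF$, where $\lambda := 2^{D^\eta_{\max,\FF}(\rho)} = 1+R^g_\FF(\rho')$. The $\delta$-approximate RNG condition on $\E$ supplies $\tilde\sigma\in\FF$ with $\E(\sigma)/\Tr\E(\sigma)\leq(1+\delta)\tilde\sigma$, hence $\E(\sigma)\leq(1+\delta)\Tr[\E(\sigma)]\,\tilde\sigma\leq(1+\delta)\tilde\sigma$ since $\Tr\E(\sigma)\leq 1$. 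Applying the positive map $\E$ to $\rho'\leq\lambda\sigma$ and composing these bounds yields the key operator inequality $\E(\rho')\leq\lambda(1+\delta)\tilde\sigma$.

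Writing $p := \Tr\E(\rho)$ and $p' := \Tr\E(\rho')$, I would take the candidate smoothing state to be $\tau := \E(\rho')/p'$. The operator inequality above then immediately gives $1+R^g_\FF(\tau)\leq\lambda(1+\delta)/p'$, equivalently $D_{\max,\FF}(\tau)\leq\log\lambda+\log(1+\delta)-\log p'$. To sharpen $\log p'$ into $\log(p-\eta)$, I would decompose $\rho-\rho'=A_+-A_-$ into its orthogonal positive and negative parts, which have common trace $t\leq\eta$; then $p-p'=\Tr\E(A_+)-\Tr\E(A_-)$ with $\Tr\E(A_\pm)\in[0,t]$ (using that $\E$ is trace non-increasing on positive operators), which forces $|p-p'|\leq t\leq\eta$ and in particular $p'\geq p-\eta$. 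This alone recovers the correction $\log(p-\eta)-\log(1+\delta)$ on the right-hand side of the lemma, modulo confirming that $\tau$ is a legitimate smoothing witness.

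The last ingredient is to verify $\frac{1}{2}\|\tau-\E(\rho)/p\|_1\leq\eta/p$. Starting from the identity
\[ \tau - \frac{\E(\rho)}{p} \;=\; \frac{1}{pp'}\bigl[\,p\,\E(\rho'-\rho) + (p-p')\E(\rho)\,\bigr], \]
and invoking contractivity of the trace norm under CPTNI maps on Hermitian operators (the bound $\|\E(\rho'-\rho)\|_1=\|\E(A_+)-\E(A_-)\|_1\leq\Tr\E(A_+)+\Tr\E(A_-)\leq 2t$ comes from the Jordan decomposition already introduced), together with $|p-p'|\leq\eta$ and $\|\E(\rho)\|_1=p$, one controls $\|\tau-\E(\rho)/p\|_1$ by a quantity of order $\eta/p$. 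The main technical difficulty I anticipate lies in fine-tuning the constant: a naive triangle-inequality estimate only yields $\frac{1}{2}\|\tau-\E(\rho)/p\|_1\leq\frac{3\eta}{2p}$ rather than the tight $\eta/p$, so closing this gap presumably requires exploiting the cancellation between $\E(A_+)$ and $\E(A_-)$ inside the trace norm of their difference, or optimising the choice of $\rho'$. In any case, for the use of Lemma~\ref{lem:dmax} in the converse of Theorem~\ref{thm:reversibility} only the fact that this smoothing parameter vanishes with $\eta$ is essential, so an $O(\eta/p)$ bound would already suffice for the asymptotic application that follows.
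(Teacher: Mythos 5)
Your argument follows the same route as the paper's proof: you push a smoothing $\rho'$ of $\rho$ through $\E$, derive the operator inequality $\E(\rho')\leq\lambda(1+\delta)\tilde\sigma$ from the $\delta$-RNG condition, and use $\tau=\E(\rho')/\Tr\E(\rho')$ as the smoothing witness on the output side; the bounds $1+R^g_\FF(\tau)\leq\lambda(1+\delta)/p'$ and $p'\geq p-\eta$ are exactly those in the paper and are correct as you state them. The one genuine gap is the final step, which you have diagnosed accurately yourself: your triangle-inequality estimate only gives $\tfrac12\|\tau-\E(\rho)/p\|_1\leq 3\eta/(2p)$ (in fact with $p'$ rather than $p$ in the denominator), whereas the lemma requires the witness to lie within trace distance $\eta/p$ of $\E(\rho)/p$. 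Since $D^{\zeta}_{\max,\FF}$ is non-increasing in $\zeta$, what you prove is the strictly weaker inequality with $D^{3\eta/(2p)}_{\max,\FF}$ in place of $D^{\eta/p}_{\max,\FF}$. As you note, this weaker form would still suffice for the converse of Theorem~\ref{thm:reversibility}, where only the vanishing of the smoothing parameter matters; but it does not establish the lemma as stated.

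The paper closes the gap by invoking the data-processing inequality for the \emph{generalised} trace distance $\tfrac12\|P-Q\|_1+\tfrac12|\Tr P-\Tr Q|$ under positive trace--non-increasing maps~\cite[Proposition~3.8]{tomamichel_2016}, which supplies precisely the cancellation you suspected was needed: applied to $\rho-\rho'$ it yields the joint bound $\tfrac12\|\E(\rho)-\E(\rho')\|_1+\tfrac12|p-p'|\leq\tfrac12\|\rho-\rho'\|_1\leq\eta$, so the two error terms in your identity for $\tau-\E(\rho)/p$ share a single budget of $\eta$ rather than contributing $2t$ and $t$ separately. Feeding this into the standard estimate for the distance between normalised versions of subnormalised positive operators gives $\tfrac12\|\E(\rho)/p-\E(\rho')/p'\|_1\leq\eta/p$, which is Eq.~\eqref{eq:tracenorm} of the paper (taken from the proof of Proposition~6 of Ref.~\cite{regula_2023}); after that the rest of your argument goes through verbatim. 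A minor cosmetic point: the paper works with arbitrary feasible $\rho'$ and $\sigma$ and optimises only at the end, which sidesteps any question of whether the infima defining $D^\eta_{\max,\FF}(\rho)$ and $D_{\max,\FF}(\rho')$ are attained.
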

 \end{boxed}
\begin{proof}
Consider any state $\rho'$ with $\frac12 \norm{\rho - \rho'}{1} \leq \eta$.
The first step is to leverage the strong monotonicity of the generalised robustness~\cite{regula_2018}. Explicitly, write $2^{D_{\max,\FF}(\rho')} = \inf \lset \lambda\geq 1 \bar \rho' \leq \lambda \sigma,\; \sigma \in \FF \rset$ and let $\sigma \in \FF$ be any feasible solution to this problem. Then
\begin{equation}\begin{aligned}
    \frac{\E(\rho')}{\Tr \E(\rho')} &\leq \lambda \,\frac{\E(\sigma)}{\Tr \E(\rho')}\\
    &= \lambda \, \frac{\Tr \E(\sigma)}{\Tr \E(\rho')}\,\frac{\E(\sigma)}{\Tr \E(\sigma)}\\
    &\leq \lambda \,(1+\delta) \, \frac{\Tr \E(\sigma)}{\Tr \E(\rho')}\,\sigma'\\
    &\leq \lambda \,(1+\delta) \, \frac{1}{\Tr \E(\rho')}\,\sigma',
\end{aligned}\end{equation}
where we used the fact that $\E$ is a positive and trace--non-increasing map, as well as that $\frac{\E(\sigma)}{\Tr \E(\sigma)} \leq (1+\delta) \sigma'$ for some $\sigma' \in \FF$ by definition of $\OO_{\RNG,\delta}$. Optimising over all feasible $\lambda$ gives
\begin{equation}\begin{aligned}
    D_{\max} \left( \frac{\E(\rho')}{\Tr \E(\rho')} \right) \leq D_{\max,\FF}(\rho') - \log \Tr \E(\rho') + \log (1+\delta).
\end{aligned}\end{equation}
Now, since $\frac12 \norm{\rho - \rho'}{1} \leq \eta$, we have $\lvert\Tr\E(\rho) - \Tr\E(\rho')\rvert \leq \eta$, so we can bound $\Tr \E(\rho') \geq \Tr \E(\rho) - \eta$. Furthermore, for any two states and any trace non-increasing positive map $\E$ %it holds that
an application of the data processing inequality for the generalised trace distance~\cite[Proposition~3.8]{tomamichel_2016} shows that~\cite[proof of Proposition~6]{regula_2023}
\begin{equation}\begin{aligned}\label{eq:tracenorm}
  \frac12 \norm{\frac{\E(\rho)}{\Tr \E(\rho)} - \frac{\E(\rho')}{\Tr \E(\rho')} }{1} &\leq  \frac12 \norm{\frac{\E(\rho)}{\Tr \E(\rho)} - \frac{\E(\rho')}{\Tr \E(\rho)} }{1} +  \frac12 \norm{\frac{\E(\rho')}{\Tr \E(\rho)} - \frac{\E(\rho')}{\Tr \E(\rho')} }{1}\\
  &= \frac12 \norm{ \frac{ \E(\rho) - \E(\rho')}{\Tr \E(\rho)}}{1} + \frac12 \norm{ \frac{\E(\rho') \Tr\E(\rho') - \E(\rho') \Tr \E(\rho)}{\Tr \E(\rho) \Tr \E(\rho')}}{1}\\
  &= \frac12 \norm{ \frac{ \E(\rho) - \E(\rho')}{\Tr \E(\rho)}}{1} + \frac12 \left| \frac{\Tr \E(\rho') - \Tr \E(\rho)}{\Tr \E(\rho)} \right|\\
  &\leq \frac{\frac12\norm{\rho-\rho'}{1}}{\Tr\E(\rho)},
\end{aligned}\end{equation}
which altogether gives
\begin{equation}\begin{aligned}
    D^{\eta/\Tr\E(\rho)}_{\max,\FF} \left( \frac{\E(\rho)}{\Tr \E(\rho)} \right) \leq D_{\max,\FF} \left( \frac{\E(\rho')}{\Tr \E(\rho')} \right) \leq D_{\max,\FF}(\rho') - \log (\Tr \E(\rho)-\eta) + \log (1+\delta).
\end{aligned}\end{equation}
Optimising over all feasible $\rho'$ yields the statement of the Lemma.
\end{proof}
%%%%%%%%%%%%%%%%%%%%%%%%%%%%%%%%%%%%%%%%%%%%%%%%%%%%%%%%%%%%%%%%%%%%%%%%%%%%%%%%%%%%%%%%

\section*{Entanglement theory}

Our results in the subsequent sections will specialise to the theory of entanglement. We use $\SEP$ to refer to the separable states, i.e.\ the free states of this resource theory. Resource--non-generating operations $\OO_\RNG$ correspond here to non-entangling operations NE, and asymptotically resource--non-generating operations $\OO_\ARNG$ to asymptotically non-entangling operations $\ANE$.

\section{Equivalence of strong converse and probabilistic distillation in entanglement theory}\label{sec:app_ent}

\let\e\xi

Recall that $E_{d,\OO}^{p=1} (\rho) \coloneqq r_{p=1}(\rho \to \Phi_+)$, $E_{d,\OO}^{p=1,\dagger} (\rho) \coloneqq r_{p=1}^\dagger(\rho \to \Phi_+)$, and $E_{d,\OO}^{p>0} (\rho) \coloneqq r_{p>0}(\rho \to \Phi_+)$.

In order to study transformations of entangled states under different free operations in entanglement theory, in particular LOCC, it is important to note that now it does not suffice to study sub-normalised quantum operations $\E$ like we did previously for RNG operations, as we must ensure that the overall instrument $(\E,\E')$ can be realised as a free protocol. To simplify the considerations, we make two observations:
\begin{enumerate}[(i)]
  \item Let $\OO$ be any class of operations that includes local operations and shared randomness (LOSR). Due to isotropic twirling~\cite{horodecki_1999}, the output of any distillation protocol with $\OO$ can be assumed without loss of generality to be of the form
  \begin{equation}\begin{aligned}\label{eq:isotropic}
    \varsigma(m, \ve) \coloneqq \left(1- \ve \right) \Phi_2^{\otimes m} + \ve \tau_{2^m},
  \end{aligned}\end{equation}
  where
  \bb
\Phi_d \coloneqq \ketbra{\Phi_d}\, ,\qquad \ket{\Phi_d}\coloneqq \frac{1}{\sqrt{d}} \sum_{i=0}^{d-1} \ket{ii}\, ,\qquad \tau_d \coloneqq \frac{\id - \Phi_d}{d^2-1}\,.
\ee

Now, given a positive integer $m\in \NN$ representing the number of local qubits and two error probabilities $\e,\delta\in [0,1]$ such that $\e+\delta\leq 1$, let us define the $(2^m+1)\times (2^m+1)$ bipartite state
\bb
\omega(m,\e, \delta) \coloneqq&\ \left(1- \e - \delta \right) \Phi_2^{\otimes m} + \e\, \tau_{2^m} + \delta \ketbra{ee} \\
=&\ \left( 1-\frac{\e}{1-4^{-m}} - \delta \right) \Phi_2^{\otimes m} + \frac{\e}{1-4^{-m}} \left(\frac{\id}{4}\right)^{\otimes m} + \delta \ketbra{ee} ,
\label{omega_m_eps_delta}
\ee
where $\ket{e}$ represents a local error flag that is orthogonal to the Hilbert space corresponding to the $m$ qubits.

 \item Instead of probabilistic transformations into isotropic states as in~\eqref{eq:isotropic}, we can then consider deterministic transformations into $\omega(m,\e,q)$. This is because, on the one hand, for any probabilistic distillation protocol which results in the state $\varsigma(m, \ve)$ with some probability $p$, if the protocol fails, Alice and Bob can simply prepare the state $\proj{ee}$ with probability $1-p$, thus obtaining the state $\omega(m,\ve p,1-p)$ deterministically. On the other hand, any deterministic protocol which results in the state $\omega(m,\e,\delta)$ can be easily modified to yield the state $\varsigma(m, \e/(1-\delta))$ with probability $1-\delta$ by simply 
measuring whether an error occurred or not. In particular, for any non-vanishing $p$, the error $\ve$ vanishes in the limit $m \to \infty$ if and only if so does $\e$.

\end{enumerate}

Let $\OO$ be a class of free operations in entanglement theory that includes LOCC. Given as above $\e,\delta\in [0,1)$ such that $\e+\delta<1$, let us define the corresponding one-shot probabilistically distillable entanglement by
\bb
E_{d,\OO}^{(1), \e, \delta}(\rho) \coloneqq \max \lsetr  m\in \NN \barr \exists\ \E\in \OO \cap {\rm CPTP}:\ \E(\rho) = \omega(m,\e,\delta) \rsetr ,
\ee
where $\rho$ is a generic state, and $\omega(m,\e,\delta)$ is defined by~\eqref{omega_m_eps_delta}. Asymptotically, we can set
\bb
E_{d,\OO}^{\e, \delta}(\rho) \coloneqq \liminf_{n\to\infty} \frac1n\, E_{d,\OO}^{(1), \e, \delta}(\rho^{\otimes n}).
\ee

\begin{boxed}
\begin{lemma} 
Let $m\in \NN$ be a positive integer, and consider $\e,\delta\in [0,1]$ with $\e+\delta\leq 1$. Then, for all $\lambda\in [0,1]$ and all $k\in \NN$ with $k<m$ the following transitions are possible via LOCC:
\begin{alignat}{4}
&\omega (m,\e, \delta) \to \omega(m'\!,\e'\!, \delta')\, ,\qquad && m'= m\, ,\quad && \e'= \e + \lambda \delta (1-2^{-m})\, ,\quad && \delta' = \delta (1-\lambda) \, ; \label{transformation_type_1} \\
&\omega (m,\e, \delta) \to \omega(m''\!\!,\e''\!\!, \delta'')\, ,\qquad && m''\hspace{-2.2pt} = m-k\, ,\quad &&\e'' \hspace{-2.2pt} = \frac{1-4^{-m+k}}{1-4^{-m}}\,\frac{\e}{2^k}\, ,\quad &&\delta''\hspace{-2.2pt} = \delta + \frac{1-2^{-k}}{1-4^{-m}}\, \e\, . \label{transformation_type_2}
\end{alignat}
\end{lemma}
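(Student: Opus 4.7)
The plan is to exhibit, for each of the two claimed transitions, an explicit \LOCC\ protocol built only from local projections onto the error flag, local computational-basis measurements, classical communication of outcomes, and shared randomness, and then to verify by direct computation that the output state has precisely the claimed weights on $\Phi_2^{\otimes m'}$, $\tau_{2^{m'}}$, and $\ketbra{ee}$.

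For \eqref{transformation_type_1}, I would have Alice and Bob first perform the local two-outcome projective measurement $\{\id-\ketbra{e},\,\ketbra{e}\}$ and broadcast the outcome. On the ``no-error'' branch, which carries total weight $1-\delta$, they do nothing, so that $(1-\e-\delta)\Phi_2^{\otimes m}+\e\,\tau_{2^m}$ passes through unchanged. On the ``error'' branch, of weight $\delta$, they use a shared random bit to decide between leaving $\ketbra{ee}$ intact (with probability $1-\lambda$) and discarding it in favour of the freshly prepared state
\begin{equation}
\rho^{\mathrm{iso}} \coloneqq 2^{-m}\,\Phi_2^{\otimes m} + \bigl(1-2^{-m}\bigr)\tau_{2^m}
\end{equation}
(with probability $\lambda$). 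The crucial observation is that $\rho^{\mathrm{iso}}$ is the isotropic state of fidelity $F=1/d$ with $d=2^m$, which sits exactly at the separability threshold of the isotropic family and equals the $U\otimes U^*$-twirl of the classically correlated product ensemble $\frac{1}{2^m}\sum_i \ketbra{ii}$; it is therefore preparable by local operations with shared randomness. Summing the contributions reproduces the weights of $\omega(m,\e',\delta')$ with $\e'=\e+\lambda\delta(1-2^{-m})$ and $\delta'=(1-\lambda)\delta$.

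For \eqref{transformation_type_2}, the same error-flag detection first routes any $\ketbra{ee}$-weight directly to the output error branch. Conditional on no local error, Alice and Bob each measure their first $k$ qubits in the computational basis, classically compare outcomes, and keep the remaining $m-k$ qubits if and only if all $k$ pairs coincide, otherwise overwriting the output with $\ketbra{ee}$. The verification boils down to one calculation: writing $(4^m-1)\tau_{2^m}=\id-\Phi_2^{\otimes m}$ and factorising $\Phi_2^{\otimes m}=\Phi_2^{\otimes k}\otimes\Phi_2^{\otimes m-k}$, the noise component $\tau_{2^m}$ produces matching outcomes with probability $(4^m 2^{-k}-1)/(4^m-1)$, and the conditional post-measurement state on the retained qubits is again an isotropic mixture of $\Phi_2^{\otimes m-k}$ and $\tau_{2^{m-k}}$, thanks to the identity $\id_{2^{m-k}}-2^{-k}\Phi_2^{\otimes m-k}=(1-2^{-k})\Phi_2^{\otimes m-k}+(4^{m-k}-1)\tau_{2^{m-k}}$. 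Collecting the three contributions and simplifying reproduces the claimed $\e''$ and $\delta''$.

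The only genuine obstacle is the bookkeeping in the second step: one must correctly distribute the noise weight $\e$ between the ``match $\Rightarrow$ keep'' and ``mismatch $\Rightarrow$ flag'' branches and verify that the matched post-measurement state indeed reassembles into the required $\alpha\Phi_2^{\otimes m-k}+\beta\tau_{2^{m-k}}$ form with the specific prefactors in the statement. Everything else is immediate: both protocols use exclusively local projections, local computational-basis measurements, classical communication, and shared randomness, so \LOCC\ realisability requires no further argument.
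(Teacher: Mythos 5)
Your proposal is correct and follows essentially the same route as the paper: the identical two LOCC protocols (error-flag check plus randomised preparation of the separable boundary isotropic state for the first transition; error-flag check plus computational-basis measurement and comparison of $k$ qubit pairs for the second), with the paper leaving the weight bookkeeping "to the reader" while you carry it out explicitly and correctly. Your added justification that $2^{-m}\Phi_2^{\otimes m}+(1-2^{-m})\tau_{2^m}$ is the $U\otimes U^*$-twirl of $\frac{1}{2^m}\sum_i\ketbra{ii}$, hence LOSR-preparable, is exactly the separability fact the paper cites.
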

\end{boxed}

\begin{proof}
The transformation of type~\eqref{transformation_type_1} is obtained by the following procedure:
\begin{enumerate}[(i)]
\item Alice and Bob check if the system is in the error state $\ket{e}$; if it is not, they do nothing;
\item if it is, with some probability $\lambda$ they prepare the separable~\cite{horodecki_1999} state $2^{-m} \Phi_{2}^{\otimes m} + (1-2^{-m}) \tau_{2^m}$; with probability $1-\lambda$, they do nothing.
\end{enumerate}
The transformation of type~\eqref{transformation_type_2}, instead, is obtained by a different procedure. Intuitively, we sacrifice $k$ pairs of qubits to try and determine whether the system is in the maximally entangled state or nor. More precisely, we run the following protocol:
\begin{enumerate}[(i)]
\item Alice and Bob check if the system is in the error state $\ket{e}$; if it is, they do nothing;
\item if it is not, then they measure the first $r$ qubits in the computational basis;
\item if upon communicating the outcomes they find that some of them differ, they declare an error, re-preparing the rest of the system in the state $\ket{ee}$;
\item if instead all outcomes are found to coincide pairwise, they do nothing on the remaining $m-k$ pairs of qubits.
\end{enumerate}
The elementary computations needed to verify that these protocols yield the claimed transformations are left to the reader.
\end{proof}

\begin{boxed}
\begin{proposition}\label{prop:dist}
Let $\OO$ be a class of free operations in entanglement theory that is closed under composition with LOCCs. Then, for a fixed $\rho$, the distillable entanglement $E_{d,\OO}^{\e,\delta}(\rho)$ depends on $\e,\delta\in [0,1)$ 
only through the sum $\e+\delta$, or equivalently through $p (1-\ve)$.
\end{proposition}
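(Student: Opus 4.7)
The plan is to show $E_{d,\OO}^{\e,\delta}(\rho) = E_{d,\OO}^{\e',\delta'}(\rho)$ whenever $\e+\delta = \e'+\delta' < 1$, by composing any free deterministic distillation protocol with LOCC post-processings that interconvert the target states $\omega(m,\e,\delta) \leftrightarrow \omega(m,\e',\delta')$. By the closure hypothesis on $\OO$, such compositions remain in $\OO$. The argument is symmetric in the two parameter pairs, so I prove only one inequality: starting from a sequence $(\E_n)_n$ with each $\E_n \in \OO \cap {\rm CPTP}$, $\E_n(\rho^{\otimes n}) = \omega(m_n,\e,\delta)$ and $m_n/n \to r$, I construct an LOCC post-processing yielding $\omega(m_n^\ast, \e', \delta')$ exactly with $m_n^\ast/n \to r$. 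Swapping the roles of the two pairs then delivers the reverse inequality.

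The strategy is to first reach an intermediate state of the form $\omega(m_n^\ast, \e_n^\star, \delta')$ with \emph{exact} failure probability $\delta'$ and error $\e_n^\star \leq \e'$, and then adjust the error upward with a single LOSR mixing step. If $\e' \geq \e$ (equivalently $\delta' \leq \delta$), take $m_n^\ast = m_n$ and apply transformation~\eqref{transformation_type_1} with $\lambda = (\delta-\delta')/\delta$: the failure becomes $\delta(1-\lambda) = \delta'$ exactly, while the error becomes the convex combination $\e_n^\star = 2^{-m_n}\e + (1-2^{-m_n})\e' \leq \e'$. If instead $\e' < \e$, first apply transformation~\eqref{transformation_type_2} with a constant number of sacrificed qubit pairs $k = \lceil \log_2(\e/\e') \rceil + 1$, chosen so that $\e_k < \e'$ for large $m_n$, and then transformation~\eqref{transformation_type_1} with $\lambda$ tuned to set the failure to exactly $\delta'$; the inequality $\e_k + \delta_k \leq \e + \delta$, easily verified from the Lemma's formulas, forces $\e_n^\star \leq \e'$. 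In either case $m_n^\ast = m_n$ or $m_n - k$ with $k$ constant, so $m_n^\ast/n \to r$.

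From the intermediate state $\omega(m_n^\ast, \e_n^\star, \delta')$ I finish with a single LOSR operation that, with probabilities $(q,r_0,s)$ summing to one, either does nothing or replaces the state with one of the separable states $\tau_{2^{m_n^\ast}}$ or $\ketbra{ee}$; its output is $\omega(m_n^\ast, q\e_n^\star + r_0, q\delta' + s)$. Solving the linear system $q\e_n^\star + r_0 = \e'$, $q\delta' + s = \delta'$, $q+r_0+s = 1$ gives $q = (1-\e'-\delta')/(1-\e_n^\star-\delta')$, and all three probabilities are non-negative precisely when $\e_n^\star \leq \e'$, which holds by construction, so the output equals $\omega(m_n^\ast, \e', \delta')$ exactly. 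Letting $r \uparrow E_{d,\OO}^{\e,\delta}(\rho)$ and symmetrising completes the proof. The equivalent statement in terms of the product $p(1-\ve)$ follows from the identifications $p = 1-\delta$ and $\ve = \e/(1-\delta)$, under which $\e+\delta = 1 - p(1-\ve)$.

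The main obstacle I anticipate is verifying the direction of the inequality $\e_n^\star \leq \e'$ in both cases, since the LOSR mixing step can only inject noise and never remove it, and so the intermediate state must have at most as much error as the target. In Case A this follows immediately from the convex-combination expression above together with $\e \leq \e'$; in Case B it reduces to the calculation $\e_k+\delta_k \leq \e+\delta$, which is the technical core of the argument and follows from a direct manipulation of the explicit formulas in the preceding Lemma, using that $2^k \geq 1$.
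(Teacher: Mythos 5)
Your proof is correct and follows essentially the same route as the paper's: both reduce the claim to the two LOCC transformations of the preceding Lemma, with the paper dispatching their combination as ``clear'' in the asymptotic regime while you carry out the exact parameter bookkeeping (the identity $\e''+\delta''\leq \e+\delta$, the ordering $\e_n^\star\leq\e'$ needed for the final LOSR noise-injection step, and the non-negativity of the mixing probabilities). The only caveat is the boundary case $\e'=0<\e$, where your choice $k=\lceil\log_2(\e/\e')\rceil+1$ is undefined; there one should let $k_n\to\infty$ sublinearly in $n$ (leaving the rate unchanged) and settle for error tending to zero rather than exactly zero, which is all that is used downstream --- a point the paper's own one-line proof glosses over as well.
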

\end{boxed}

\begin{proof}
For asymptotically large $m$ and fixed $k$, transformations~\eqref{transformation_type_1}--\eqref{transformation_type_2} simplify considerably, becoming
\begin{alignat}{4}
&\omega (m,\e, \delta) \to \omega(m'\!,\e'\!, \delta')\, ,\qquad && m'= m\, ,\quad && \e' \approx \e + \lambda \delta\, ,\quad && \delta' = \delta (1-\lambda) \, ; \label{approx_type_1} \\
&\omega (m,\e, \delta) \to \omega(m''\!\!,\e''\!\!, \delta'')\, ,\qquad && m''\hspace{-2.2pt} = m-k\, ,\quad &&\e'' \hspace{-2.2pt} \approx \frac{\e}{2^k}\, ,\quad &&\delta''\hspace{-2.2pt} \approx \delta + (1-2^{-k})\, \e\, . \label{approx_type_2}
\end{alignat}
With these transformations, it is clear that we can turn a pair $(\e,\delta)$ with $\e+\delta<1$ into another pair $(\e',\delta')$ with $\e'+\delta'<1$ if and only if $\e+\delta = \e'+ \delta'$.
\end{proof}

A corollary of the above is Theorem~\ref{thm:entanglement} in the main text, whose extended statement we present here.
{
\renewcommand{\thetheorem}{3}
\begin{boxed}[filled]
\begin{theorem}
\label{thm:entanglement-s}
Let $\OO$ be any class of operations which is closed under composition with LOCC, i.e.\ such that $\E \in \OO,\, \F \in \rm LOCC \,\Rightarrow\, \F \circ \E \in \OO$. This includes in particular the set $\rm LOCC$ itself. 

Consider then any sequence of entanglement transformation protocols $(\E_n)_{n}$ with associated error $\limsup_{n\to\infty} \ve_n = \ve < 1$ and probability of success $\liminf_{n\to\infty}p_n = p > 0$. Then, then there exists another sequence of entanglement distillation protocols $(\E'_n)_n$ with error $\limsup_{n\to\infty} \ve'_n \eqqcolon \ve'$ and probability $\liminf_{n\to\infty} p'_n \eqqcolon p'$ if and only if
\begin{equation}\begin{aligned}
  p\, (1-\ve) = p' \, (1-\ve').
\end{aligned}\end{equation}
Hence, for all states $\rho$ and all sets of operations $\OO$ as above,
\begin{equation}\begin{aligned}\label{eq:thm_dist1}
E_{d,\OO}^{p>0} (\rho) = E_{d,\OO}^{p=1,\dagger}(\rho) = E_{d,\OO}^{p>0,\dagger}(\rho).
\end{aligned}\end{equation}

In particular, for all non-entangling operations it holds that
\begin{equation}\begin{aligned}\label{eq:thm_dist2}
  D^\infty_{\rm SEP}(\rho) &= E_{d,{\rm NE}}^{p>0}(\rho) = E_{d,{\rm NE}}^{p=1,\dagger}(\rho) = E_{d,{\rm NE}}^{p=1,\ddagger}(\rho)\\
  &= E_{d,{\rm ANE}}^{p>0}(\rho) = E_{d,{\rm ANE}}^{p=1,\dagger}(\rho) = E_{d,{\rm ANE}}^{p=1,\ddagger}(\rho).
\end{aligned}\end{equation}
\end{theorem}
\end{boxed}
}
The last statement in the above can be seen from the result that $D^\infty_{\rm SEP}(\rho) = E^{p=1,\dagger}_{d,\rm NE}(\rho) = E^{p=1,\ddagger}_{d,\rm ANE}(\rho)$~\cite{brandao_2010} (cf.~\cite[Corollary~III.3]{brandao_2010-1}), where the latter term denotes the strong converse distillable entanglement defined using the slightly stronger notion of a strong converse rate found in Eq.~\eqref{eq:def_sc_stronger}.\footnote{Whether this equivalence between the two definitions of the strong converse holds also for other sets of free operations $\OO$ is an interesting open problem.}

A more direct proof of the fact that the rate $D^\infty_{\rm SEP}(\rho)$ can be achieved by probabilistic non-entangling operations (as opposed to \emph{asymptotically} non-entangling ones as in Theorem~\ref{thm:reversibility-s}) can be obtained by following the proof of Theorem~\ref{thm:reversibility-s} but choosing $\omega_n = \Phi_2^{\otimes \floor{rn}}$ and realising that we can then pick $\pi_n$ to be separable states~\cite{vidal_1999,harrow_2003}, so no resources need to be generated whatsoever.

%%%%%%%%%%%%%%%%%%%%%%%%%%%%%%%%%%%%%%%%%%%%%%%%%%%%%%%%%%%%%%%%%%%%%%%%%%%%%%%%%%%%%%%%
%%%%%%%%%%%%%%%%%%%%%%%%%%%%%%%%%%%%%%%%%%%%%%%%%%%%%%%%%%%%%%%%%%%%%%%%%%%%%%%%%%%%%%%%
%%%%%%%%%%%%%%%%%%%%%%%%%%%%%%%%%%%%%%%%%%%%%%%%%%%%%%%%%%%%%%%%%%%%%%%%%%%%%%%%%%%%%%%%
%%%%%%%%%%%%%%%%%%%%%%%%%%%%%%%%%%%%%%%%%%%%%%%%%%%%%%%%%%%%%%%%%%%%%%%%%%%%%%%%%%%%%%%%

\section{Irreversibility of entanglement theory under probabilistic non-entangling operations}\label{sec:app_irrev}

Although here we will focus on (asymptotically) non-entangling operations (A)NE, i.e.\ (A)RNG maps with the set of free states $\FF = \SEP$, the discussion below applies verbatim also to the case of $\FF = \PPT$, which corresponds to (asymptotically) PPT--non-generating operations. This in particular implies irreversibility under PPT operations~\cite{rains_2001} enhanced by asymptotic entanglement non-generation.

Recall that we defined asymptotically resource--non-generating operations (and hence also asymptotically non-entangling ones, ANE) using the generalised robustness $R_\FF^g$ (see Eq.~\eqref{eq:def_rate_arng}). 
Let us consider two amended variants of this definition: one defined using the \emph{standard robustness}~\cite{vidal_1999}
\begin{equation}\begin{aligned}
  R^s_\SEP(\rho) &\coloneqq  \inf \lsetr \lambda \in \RR_+ \barr \frac{\rho + \lambda \sigma}{1+\lambda} \in \SEP,\; \sigma \in \SEP \rsetr,
  \end{aligned}\end{equation}
  and one using the \emph{negativity}~\cite{vidal_2002}
  \begin{equation}\begin{aligned}
    N(\rho) &\coloneqq \frac12 \left(\norm{\rho^\Gamma}{1} - 1\right)
  \end{aligned}\end{equation}
  with $\Gamma$ denoting partial transpose. The classes of operations are
\begin{equation}\begin{aligned}
  \OO_{\NE,\delta,s} &\coloneqq \lsetr \E \in \CPTNI \barr R^s_\SEP\left(\frac{\E(\sigma)}{\Tr \E(\sigma)}\right) \leq \delta\; \forall \sigma \in \SEP \rsetr,\\
  \OO_{\NE,\delta,N} &\coloneqq \lsetr \E \in \CPTNI \barr N\left(\frac{\E(\sigma)}{\Tr \E(\sigma)}\right) \leq \delta\; \forall \sigma \in \SEP \rsetr.
\end{aligned}\end{equation}
  The corresponding asymptotic transformation rates are defined analogously as before,
\begin{equation}\begin{aligned}
    r_{p>0}(\rho \toANEs \omega) \coloneqq \sup_{(\E_n)_n}  \Bigg\{ \,r \;\Bigg|\; & \lim_{n\to\infty} F\!\left( \frac{\E_n(\rho^{\otimes n})}{\Tr \E_n(\rho^{\otimes n})} ,\, \omega^{\otimes \floor{rn}} \right) = 1,\\
    & \lim_{n\to\infty} \,\sup_{\sigma \in \SEP} \,R^s_{\SEP} \!\left(\frac{\E_n(\sigma)}{\Tr \E_n(\sigma)}\right)  = 0,\quad \liminf_{n\to\infty}\, \Tr \E_n(\rho^{\otimes n}) > 0 \Bigg\},\\
        r_{p>0}(\rho \toANEN \omega) \coloneqq \sup_{(\E_n)_n}  \Bigg\{ \,r \;\Bigg|\; & \lim_{n\to\infty} F\!\left( \frac{\E_n(\rho^{\otimes n})}{\Tr \E_n(\rho^{\otimes n})} ,\, \omega^{\otimes \floor{rn}} \right) = 1,\\
    & \lim_{n\to\infty} \,\sup_{\sigma \in \SEP} \,N \!\left(\frac{\E_n(\sigma)}{\Tr \E_n(\sigma)}\right)  = 0,\quad \liminf_{n\to\infty}\, \Tr \E_n(\rho^{\otimes n}) > 0 \Bigg\}.
\end{aligned}\end{equation}
Noting that $R^s_\SEP(\rho) \geq N(\rho)$~\cite{vidal_2002}, for any state it holds that
\begin{equation}\begin{aligned}
  r_{p>0}(\rho \toNE \omega) \leq r_{p>0}(\rho \toANEs \omega) \leq r_{p>0}(\rho \toANEN \omega).
\end{aligned}\end{equation}
That is, imposing the asymptotic vanishing of the negativity is the weakest of the constraints that we consider, and thus $\ANE,N$ form the most permissive type of transformations. In particular, any transformation 
that is irreversible under $\ANE,N$ is also irreversible under $\ANE,s$ and $\ANE$.

As before, we define the probabilistic distillable entanglement $E_{d,\ANEN}^{p>0}(\rho) \coloneqq r_{p>0}(\rho \toANEN \Phi_+)$ and $E_{d,\ANEs}^{p>0}(\rho) \coloneqq r_{p>0}(\rho \toANEs \Phi_+)$ as well as the probabilistic entanglement cost $E_{c,\ANEN}^{p>0}(\rho) \coloneqq r_{p>0}(\Phi_+ \toANEN \rho)^{-1}$ and $E_{c,\ANEs}^{p>0}(\rho) \coloneqq r_{p>0}(\Phi_+ \toANEs \rho)^{-1}$.

In~\cite{lami_2023}, it was shown that in the deterministic setting there exists a two-qutrit state $\omega_3$ such that
\begin{equation}\begin{aligned}
  E_{d,\ANEN}^{p=1} (\omega_3) < E_{c,\ANEN}^{p=1} (\omega_3).
\end{aligned}\end{equation}
Here we generalise this to the probabilistic setting.
{
\renewcommand{\thetheorem}{2}
\begin{boxed}[filled]
\begin{theorem}\label{thm:prob_irrev-s}
The state $\omega_3$ satisfies that
\begin{equation}\begin{aligned}
  E_{d,\ANEN}^{p>0} (\omega_3) = E_{d,\ANEN}^{p=1} (\omega_3) = E_{d,\NE}^{p=1} (\omega_3) < E_{c,\NE}^{p=1} (\omega_3) = E_{c,\ANEN}^{p=1} (\omega_3) = E_{c,\ANEN}^{p>0} (\omega_3).
\end{aligned}\end{equation}
In particular, even in the probabilistic setting, the theory of entanglement remains irreversible under non-entangling operations, or under operations that generate asymptotically vanishing amounts of entanglement as quantified by the standard robustness $R^s_\SEP$ or the negativity $N$.\\[-6pt]

The same remains true if we replace $\SEP$ with the set $\PPT$ and {\textnormal{(A)NE}} operations with (asymptotically) PPT--non-generating operations.
\end{theorem}
\end{boxed}
}

The proof of the theorem will rely on two lemmas that establish a characterisation of the probabilistic rates of distillation and dilution, respectively.

\begin{boxed}
\begin{lemma}[Relative entropy and equivalence of probabilistic distillation]\label{lem:irrev_dist}
For any state $\rho$, the probabilistic distillable entanglement under $\ANEs$ or $\ANEN$ operations equals that under $\ANE$, that is, it is given by the regularised relative entropy:
\begin{equation}\begin{aligned}
 E_{d,\ANEs}^{p>0} (\rho) =  E_{d,\ANEN}^{p>0} (\rho) = E_{d,\ANE}^{p>0} (\rho) = D_\SEP^\infty(\rho).
\end{aligned}\end{equation}
\end{lemma}
\end{boxed}
\begin{proof}
By the result of Theorem~\ref{thm:entanglement-s}, we have that
\begin{equation}\begin{aligned}
  E_{d,\ANEN}^{p>0} (\rho) \geq E_{d,\NE}^{p>0}(\rho) =  D_\SEP^\infty(\rho),
\end{aligned}\end{equation}
so it remains to show the opposite direction. This can already be seen from Lemma~\ref{lem:hierarchy} combined with a result of~\cite{lami_2023}: we have the chain of inequalities
\bb
E_{d,\ANEN}^{p>0} (\rho) \textleq{(i)} E_{d,\ANE,N}^{p=1,\dagger} (\rho) \texteq{(ii)} E_{d,\NE}^{p=1,\dagger} (\rho) \texteq{(iii)} D_\SEP^\infty(\rho)\, ,
\ee
where (i)~follows from Lemma~\ref{lem:hierarchy}, (ii)~from~\cite[Lemma~S17]{lami_2023}, and finally (iii)~is known from the works of Brand\~{a}o and Plenio~\cite{brandao_2010,brandao_2010-1} (see also Theorem~\ref{thm:entanglement-s}).

For completeness, we will give a more direct alternative argument. To this end, we would like to establish a converse result as in Theorem~\ref{thm:reversibility-s}; however, the probabilistic monotonicity of $R^g_\SEP$ (Lemma~\ref{lem:dmax}) no longer holds when the operations $\OO_{\NE,\delta,N}$ are considered. Our idea will be to show that an approximate version of Lemma~\ref{lem:dmax} can still be established in the special case of distillation, that is, conversion into the maximally entangled state. This will in fact directly lead to a strong converse bound.

Consider then any sequence $(\E_n)_n$ of operations $\E_n \in \OO_{\NE,\delta_n,N}$ such that 
\bb
1 - \Tr \left( \frac{ \E_n(\rho^{\otimes n}) }{ \Tr \E_n(\rho^{\otimes n}) }\, \Phi_+^{\otimes \floor{rn}} \right) = 1-F\left( \frac{\E_n(\rho^{\otimes n})}{\Tr \E_n(\rho^{\otimes n})},\, \Phi_+^{\otimes \floor{rn}}\right) \eqqcolon \ve_n ,
\ee
with $\liminf_{n\to\infty} \Tr \E_n(\rho^{\otimes n}) \coloneqq p > 0$. It will be crucial to notice that, for any separable state $\sigma$, we have
\begin{equation}\begin{aligned}\label{eq:crucial}
  \Tr \left( \Phi_+^{\otimes \floor{rn}} \E_n(\sigma) \right) &= \Tr \left[ \left(\Phi_+^{\otimes \floor{rn}}\right)^\Gamma \E_n(\sigma)^\Gamma\right]\\
  &\leq \Tr \norm{\left(\Phi_+^{\otimes \floor{rn}}\right)^\Gamma}{\infty} \norm{\E_n(\sigma)^\Gamma}{1}\\
  &\leq \frac{1}{2^{\floor{rn}}} (1+2\delta_n) \Tr \E_n(\sigma)\\
  &\leq \frac{1}{2^{\floor{rn}}} (1+2\delta_n),
\end{aligned}\end{equation}
where in the third line we used the fact that the eigenvalues of %$\left(\Psi_+^{\otimes \floor{rn}}\right)^\Gamma$ 
$\left(\Phi_+^{\otimes \floor{rn}}\right)^\Gamma$ --- which is proportional to the swap operator --- are $\pm 2^{-\floor{rn}}$. We further leveraged the fact %and 
that
\begin{equation}\begin{aligned}
 \frac12 \left( \norm{\frac{\E_n(\sigma)^\Gamma}{\Tr \E_n(\sigma)^\Gamma}}{1} - 1 \right) \leq \delta_n
\end{aligned}\end{equation}
by the definition of $\OO_{\NE,\delta_n,N}$, and the last line follows since $\E_n$ is trace non-increasing.

Consider now any error sequence $(\zeta_n)_n$ with $\lim_{n\to\infty} \zeta_n = 0$, so that $\zeta_n < p$ and hence also $\zeta_n < \Tr\E_n(\rho^{\otimes n})$ holds for all sufficiently large $n$.
Let $\rho'_n$ be any state such that $\frac12 \norm{\rho^{\otimes n} - \rho'_n}{1} \leq \zeta_n$, and let $\sigma_n \in \SEP$ be any state such that $\rho'_n \leq \lambda_n \sigma_n$ for some $\lambda_n$. Then
\begin{equation}\begin{aligned}\label{eq:prob_strong_conv_error}
  1-\ve_n &= \Tr \left( \frac{ \E_n(\rho^{\otimes \floor{rn}}) }{ \Tr \E_n(\rho^{\otimes \floor{rn}}) }\, \Phi_+^{\otimes \floor{rn}} \right) \\
  &= \Tr \left( \frac{ \E_n(\rho'_n) }{ \Tr \E_n(\rho'_n) }\, \Phi_+^{\otimes \floor{rn}}\right) + \Tr \left( \left[\frac{ \E_n(\rho^{\otimes \floor{rn}}) }{ \Tr \E_n(\rho^{\otimes \floor{rn}}) } - \frac{ \E_n(\rho'_n) }{ \Tr \E_n(\rho'_n) } \right] \Phi_+^{\otimes \floor{rn}}\right)\\
  &\leq \lambda_n \Tr \left( \frac{ \E_n(\sigma_n) }{ \Tr \E_n(\rho'_n) }\, \Phi_+^{\otimes \floor{rn}}\right) + \frac12 \norm{\frac{ \E_n(\rho^{\otimes \floor{rn}}) }{ \Tr \E_n(\rho^{\otimes \floor{rn}}) } - \frac{ \E_n(\rho'_n) }{ \Tr \E_n(\rho'_n) } }{1}\\
  &\textleq{(iv)} \lambda_n \frac{1}{2^{\floor{rn}}} (1+2\delta_n) \frac{1}{\Tr \E_n(\rho'_n)} + \frac{\zeta_n}{\Tr \E_n(\rho'_n)}\\
  &\leq \left( \lambda_n \frac{1}{2^{\floor{rn}}} (1+2\delta_n)  + \zeta_n \right) \frac{1}{\Tr \E_n(\rho^{\otimes n}) - \zeta_n}.
\end{aligned}\end{equation}
Here, %(i) 
(iv) follows by Eq.~\eqref{eq:crucial} together with the probabilistic data processing inequality for the trace distance that we previously showed in Eq.~\eqref{eq:tracenorm}. 

So far, the above derivation made no assumption about the rate $r$. Let us now assume that
\begin{equation}\begin{aligned}
  r &> D^\infty_\SEP(\rho)\\
  &= \inf_{(\zeta_n)_n} \lset \limsup_{n\to\infty} \frac1n D^{\zeta_n}_{\max,\SEP}(\rho^{\otimes n}) \bar \lim_{n\to\infty} \zeta_n = 0 \rset,
\end{aligned}\end{equation}
where the second line follows from the asymptotic equipartition property of Brand\~{a}o--Plenio--Datta (see~\cite[Proposition~IV.2]{brandao_2010} \cite[Theorem~1]{datta_2009-2}). This implies that there exists an asymptotically vanishing error sequence $(\zeta_n)_n$ such that
\begin{equation}\begin{aligned}
  \limsup_{n\to\infty} \frac1n D^{\zeta_n}_{\max,\SEP}(\rho^{\otimes n}) < r,
\end{aligned}\end{equation}
and hence that there exist states $\rho'_n$ with $\frac12 \norm{\rho^{\otimes n} - \rho'_n}{1} \leq \zeta_n$ and $\rho'_n \leq \lambda_n \sigma_n$ such that
\begin{equation}\begin{aligned}
  c \coloneqq r - \frac1n \log \lambda_n > 0
\end{aligned}\end{equation}
for all sufficiently large $n$. 
Plugging this into~\eqref{eq:prob_strong_conv_error}, we have that
\begin{equation}\begin{aligned}\label{eq:goes_to_zero}
 1-\ve_n &\leq  \left( 2^{\log \lambda_n} 2^{-rn + 1} (1+2\delta_n)  + \zeta_n \right) \frac{1}{\Tr \E_n(\rho^{\otimes n}) - \zeta_n}\\
 &= \left( 2^{- c n  + 1} (1+2\delta_n)  + \zeta_n \right) \frac{1}{\Tr \E_n(\rho^{\otimes n}) - \zeta_n}.
\end{aligned}\end{equation}
Recalling that $\lim_{n\to\infty} \zeta_n = 0$ while $\Tr \E_n(\rho^{\otimes n})$ is lower bounded by a constant, if we furthermore assume that $\delta_n$ vanishes in the limit $n\to\infty$ (or indeed even only that $\delta_n = 2^{o(n)}$), the whole term on the right-hand side of Eq.~\eqref{eq:goes_to_zero} goes to zero, and we thus obtain 
\begin{equation}\begin{aligned}
  \lim_{n\to\infty} \ve_n = 1.
\end{aligned}\end{equation}
Since this holds for any $r > D^\infty_\SEP(\rho)$, it means that no rate achievable with error less than one can be larger than $D^\infty_\SEP$. Therefore
\begin{equation}\begin{aligned}
  E_{d,\ANE,N}^{p>0}(\rho) \leq E_{d,\ANE,N}^{p>0,\ddagger}(\rho) \leq D^\infty_\SEP(\rho) , %.
\end{aligned}\end{equation}
where $E_{d,\ANE,N}^{p>0,\ddagger}$ is constructed by using the definition of rate given in Eq.~\eqref{eq:def_prob_sc_stronger}.

\end{proof}

Our approach to the bounds on entanglement cost will be based on the ideas in~\cite{lami_2023}, where the monotone known as the (logarithmic) \emph{tempered negativity}
\begin{equation}\begin{aligned}
  E_\tau(\rho) \coloneqq \log \max \lset \Tr X \rho \bar \norm{X^\Gamma}{\infty} \leq 1,\; \norm{X}{\infty} \leq \Tr X \rho \rset
\end{aligned}\end{equation}
was introduced.

\begin{boxed}
\begin{lemma}[Restrictions on probabilistic entanglement cost]\label{lem:irrev_cost}
For any state $\rho$, the probabilistic entanglement cost under $\ANEs$ and $\ANEN$ operations is lower bounded by the tempered negativity:
\begin{equation}\begin{aligned}
  E_{c,\ANEs}^{p>0} (\rho) \geq E_{c,\ANEN}^{p>0} (\rho) \geq E_{\tau}(\rho).
\end{aligned}\end{equation}
For the operations $\ANEs$, we can make an even stronger statement: the probabilistic and deterministic entanglement cost are equal, and in fact they both equal the entanglement cost under strictly non-entangling operations:
\begin{equation}\begin{aligned}
  E_{c,\ANEs}^{p>0} (\rho) = E_{c,\ANEs}^{p=1}(\rho) = E_{c,\rm NE}^{p=1} (\rho).
\end{aligned}\end{equation}
\end{lemma}
\end{boxed}
Both parts of the lemma rely on a probabilistic monotonicity result for $R^s_\SEP$, similar to the one that we previously showed for $R^g_\FF$ in Lemma~\ref{lem:dmax}. We state it as a separate lemma for clarity.

\begin{lemma}[Probabilistic monotonicity of standard robustness and negativity]
\label{lem:rs_monoton} 
 For any probabilistic operation $\E \in \OO_{\NE,\delta,N}$, it holds that
 \begin{equation}\begin{aligned}\label{eq:monoton_n}
     1 + R^s_\SEP(\rho) \geq  \norm{\frac{\E(\rho)^\Gamma}{\Tr \E(\rho)}}{1} \frac{ \Tr \E(\rho)}{ 2(1+2\delta) }.
 \end{aligned}\end{equation}

For any probabilistic operation $\E \in \OO_{\NE,\delta,s}$, it holds that
 \begin{equation}\begin{aligned}\label{eq:monoton_rs}
     1 + R^{s}_\SEP(\rho) \geq  \left[ 1 + R^{s}_{\SEP}\left(\frac{\E(\rho)}{\Tr \E(\rho)}\right)\right] \frac{ \Tr \E(\rho)}{ 1+2\delta }.
 \end{aligned}\end{equation}
 \end{lemma}
 \begin{proof}
For the case of $\E \in \OO_{\NE,\delta,N}$, consider any feasible solution for $R^s_\SEP(\rho)$, that is, a decomposition $\rho = (1+\lambda) \,\sigma_+ - \lambda \sigma_-$ with $\sigma_\pm \in \SEP$. Then
\begin{equation}\begin{aligned}
  \norm{ \frac{\E(\rho)^\Gamma}{\Tr \E(\rho)}}{1} &\leq (1+\lambda) \frac{\Tr \E(\sigma_+)}{\Tr \E(\rho)} \norm{\frac{\E(\sigma_+)^\Gamma}{\Tr \E(\sigma_+)}}{1} + \lambda \frac{\Tr \E(\sigma_-)}{\Tr \E(\rho)} \norm{\frac{\E(\sigma_-)^\Gamma}{\Tr \E(\sigma_-)}}{1}\\
  &\leq \left[ (1+\lambda) \frac{\Tr \E(\sigma_+)}{\Tr \E(\rho)} + \lambda \frac{\Tr \E(\sigma_-)}{\Tr \E(\rho)} \right] (1 + 2\delta)\\
  &\leq \frac{ 1+2 \lambda}{\Tr \E(\rho)}  (1 + 2\delta)\\
  &\leq 2 \frac{ 1+\lambda}{\Tr \E(\rho)}  (1 + 2\delta),
\end{aligned}\end{equation}
where in the second line we used that $N\!\left(\frac{\E(\sigma_\pm)}{\Tr \E(\sigma_\pm)}\right) \leq \delta$ by definition of $\OO_{\NE,\delta,N}$, and in the third line we used that the operation $\E$ may not increase trace. Optimising over all feasible $\lambda$ we get the stated result.

For $\OO_{\NE,\delta,s}$, we will use three equivalent definitions of $R^s_\SEP$, all of which follow straightforwardly from the original definition by using the fact that $\Tr \rho = 1$:
\begin{equation}\begin{aligned}
  R^{s}_\SEP(\rho) &= \min \lset \lambda \bar \rho = (1+\lambda) \,\sigma_+ - \lambda \sigma_-, \; \sigma_\pm \in \SEP \rset\\
  &= \min \lset \lambda \bar \rho \leq_\SEP (1+\lambda) \sigma_+,\; \sigma_+ \in \SEP  \rset\\
  &= \min \lset \lambda \bar \rho \geq_\SEP - \lambda \sigma_-, \;  \sigma_- \in \SEP \rset,
\end{aligned}\end{equation}
where $\leq_\SEP$ denotes inequality with respect to the cone of separable operators.
Let $\rho = (1+\lambda) \,\sigma_+ - \lambda \sigma_-$ be any feasible decomposition for $\rho$. 
Then, for any $\E \in \OO_{\NE,\delta,s}$, it holds that
\begin{equation}\begin{aligned}
  \frac{\E(\rho)}{\Tr \E(\rho)} &= (1+\lambda) \frac{\Tr \E(\sigma_+)}{\Tr \E(\rho)} \frac{\E(\sigma_+)}{\Tr \E(\sigma_+)} - \lambda \frac{\Tr \E(\sigma_-)}{\Tr \E(\rho)} \frac{\E(\sigma_-)}{\Tr \E(\sigma_-)}\\
  &\leq_\SEP (1+\lambda) \frac{\Tr \E(\sigma_+)}{\Tr \E(\rho)} (1+\delta) \,\sigma'_+ + \lambda \frac{\Tr \E(\sigma_-)}{\Tr \E(\rho)} \delta \, \sigma'_-
\end{aligned}\end{equation}
for some $\sigma'_\pm \in \SEP$, where we used that $\E$ can only generate at most $\delta$ robustness from any separable state. Since the operator in the last line is a %convex 
non-negative combination of separable operators it is separable itself, and therefore it constitutes a feasible solution for the robustness of $\frac{\E(\rho)}{\Tr \E(\rho)}$. We can find its normalisation by simply taking the trace. Thus 
\begin{equation}\begin{aligned}
  1+ R^{s}_\SEP \left( \frac{\E(\rho)}{\Tr \E(\rho)} \right) &\leq \frac{ (1+\lambda) (1 + \delta) \, \Tr \E(\sigma_+) + \lambda \delta \, \Tr \E(\sigma_-)}{\Tr \E(\rho)} \\
  &\leq \frac{ (1+\lambda) (1+\delta) + \lambda \delta}{\Tr \E(\rho)} \\
  &\leq \frac{ (1+\lambda) (1+\delta) + (1+\lambda) \delta}{\Tr \E(\rho)} \\
  &= \frac{ (1+\lambda) (1 + 2 \delta)}{\Tr \E(\rho)}
\end{aligned}\end{equation}
using the trace--non-increasing property of $\E$.
 \end{proof}

 \begin{proof}[\bfseries\upshape Proof of Lemma~\ref{lem:irrev_cost}]
Consider first $\ANEN$ cost. Let $(\E_n)_n$ be any sequence of operations $\E_n \in \OO_{\NE,
\delta_n,N}$ such that $\frac12 \norm{\frac{\E_n(\Phi_+^{\otimes n})}{\Tr \E_n(\Phi_+^{\otimes n})} - \rho^{\otimes \floor{rn}}}{1} \eqqcolon \ve_n$. 
Using that $R^s_\SEP(\Phi_+^{\otimes n}) = 2^n - 1$~\cite{vidal_1999}, we have
\begin{equation}\begin{aligned}
  n &= \log \left( 1 + R^s_\SEP(\Phi_+^{\otimes n}) \right)\\
&\textgeq{(i)} \log \norm{\frac{\E_n(\Phi_+^{\otimes n})^\Gamma}{\Tr \E_n(\Phi_+^{\otimes n})}}{1} + \log \Tr \E_n(\Phi_+^{\otimes n}) - \log 2 - \log(1+2\delta_n)\\
&\textgeq{(ii)} E_\tau\!\left(\rho^{\otimes \floor{rn}}\right) + \log (1 - 2 \ve_n)  + \log \Tr \E_n(\Phi_+^{\otimes n}) - \log 2 - \log(1+2\delta_n)\\
&\textgeq{(iii)} \floor{rn} E_\tau(\rho) + \log (1 - 2 \ve_n)  + \log \Tr \E_n(\Phi_+^{\otimes n}) - \log 2 - \log(1+2\delta_n),
\end{aligned}\end{equation}
where: (i) follows from Lemma~\ref{lem:rs_monoton}; (ii) follows from Proposition~S5 and Lemma~S6 of~\cite{lami_2023} (cf.~proofs of Theorems~1 and S16 therein); (iii) follows from the super-additivity of $E_\tau$, shown in Proposition~S5 of~\cite{lami_2023}.
Assuming that $\Tr \E_n(\Phi_+^{\otimes n}) = 2^{-o(n)}$, $\delta_n = 2^{o(n)}$ and $\limsup_n \ve_n < \frac{1}{2}$ (which are all, in fact, weaker than our assumptions) we can divide by $n$ and take the $\liminf$ as $n\to\infty$ to get
\begin{equation}\begin{aligned}
  r^{-1} \geq E_\tau(\rho),
\end{aligned}\end{equation}
which is what was to be shown.

For the case of $\ANEs$, it is immediate from the definitions that $E_{c,\ANEs}^{p>0} (\rho) \leq E_{c,\NE}^{p=1}(\rho)$, so we need to show the opposite inequality.
We know from the results of Brand\~ao and Plenio~\cite[Sec.~V]{brandao_2010} that
\begin{equation}\begin{aligned}\label{eq:bp_cost}
  E_{c,\NE}^{p=1}(\rho) = \inf_{(\zeta_n)_n} \lset \limsup_{n\to\infty} \frac1n \log\left[ 1 + R^{s,\zeta_n}_{\SEP}\!\left(\rho^{\otimes n}\right) \right] \bar \lim_{n\to\infty} \zeta_n = 0 \rset,
\end{aligned}\end{equation}
where we denoted
\begin{equation}\begin{aligned}
R^{s,\zeta}_\SEP(\rho) \coloneqq \min_{\frac12\norm{\rho' - \rho}{1} \leq \zeta} R^s_\SEP(\rho').
\end{aligned}\end{equation}
Although an exact expression for the quantity on the right-hand side of~\eqref{eq:bp_cost} is not known (unlike in the asymptotic equipartition property for $R^g_\SEP$ that we used earlier), we can still use it to establish a converse bound. Take any sequence of operations $(\E_n)_n$ with $\E_n \in \OO_{\NE,\delta_n,s}$ such that $\frac12 \norm{\frac{\E_n(\Phi_+^{\otimes n})}{\Tr \E_n(\Phi_+^{\otimes n})} - \rho^{\otimes \floor{rn}}}{1} \eqqcolon \ve_n$ with $\liminf_n \Tr \E_n(\Phi_+^{\otimes n}) > 0$ and $\lim_n \ve_n = 0$. We then have that
\begin{equation}\begin{aligned}
    n &= \log \left( 1 + R^s_\SEP(\Phi_+^{\otimes n}) \right)\\
    &\geq \log\left( 1 + R^{s}_{\SEP}\!\left(\frac{\E_n(\Phi_+^{\otimes n})}{\Tr \E_n(\Phi_+^{\otimes n})}\right) \right) + \log \Tr \E_n(\Phi_+^{\otimes n}) - \log(1+2\delta_n)\\
    &\geq \log\left(  1 + R^{s,\,\ve_n}_{\SEP}\!\left(\rho^{\otimes \floor{rn}}\right) \right) + \log \Tr \E_n(\Phi_+^{\otimes n}) - \log(1+2\delta_n)
\end{aligned}\end{equation}
using Lemma~\ref{lem:rs_monoton}. As before, the two rightmost terms will vanish asymptotically provided that $\Tr \E_n(\Phi_+^{\otimes n}) = 2^{-o(n)}$ and $\delta_n = 2^{o(n)}$. Dividing by $n$, taking the $\limsup$, and using that $\lim_{n\to\infty} \ve_n = 0$  gives
\begin{equation}\begin{aligned}
  1 &\geq \inf_{(\zeta_n)_n} \lset \limsup_{n\to\infty} \frac1n \log\left[ 1 + R^{s,\,\zeta_n}_{\SEP}\!\left(\rho^{\otimes \floor{rn}}\right) \right] \bar \lim_{n\to\infty} \zeta_n = 0\rset.
\end{aligned}\end{equation}
To argue that $r^{-1} \geq E_{c,\NE}^{p=1}(\rho)$, it thus remains to show that
\begin{equation}\begin{aligned}\label{eq:rs_additivity}
  &\inf_{(\zeta_n)_n} \lset \limsup_{n\to\infty} \frac1n \log\left[ 1 + R^{s,\,\zeta_n}_{\SEP}\!\left(\rho^{\otimes \floor{rn}}\right) \right] \bar \lim_{n\to\infty} \zeta_n = 0\rset \\
  &\geq r \, \inf_{(\zeta_n)_n} \lset \limsup_{n\to\infty} \frac1n \log\left[ 1 + R^{s,\,\zeta_n}_{\SEP}\!\left(\rho^{\otimes n}\right) \right] \bar \lim_{n\to\infty} \zeta_n = 0\rset\\
  &= r \, E_{c,\NE}^{p=1}(\rho).
\end{aligned}\end{equation}
To see this, let us begin by defining for any $n \in \NN$ the corresponding quantity
\begin{equation}\begin{aligned}
  n^{(r)} \coloneqq \min \lset \floor{r l} \bar l \in \NN,\; \floor{rl} \geq n \rset.
\end{aligned}\end{equation}
For any fixed asymptotically vanishing error sequence $(\zeta_n)_n$, we also define
\begin{equation}\begin{aligned}\label{eq:aaaa}
  \zeta_n^{(r)} \coloneqq \max \lset \zeta_l \bar l \in \NN,\; \floor{rl} \geq n \rset.
\end{aligned}\end{equation}
Then
\begin{equation}\begin{aligned}
  \limsup_{n\to\infty} \frac1n \log\left[ 1 + R^{s,\,\zeta_n}_{\SEP}\!\left(\rho^{\otimes \floor{rn}}\right) \right] &= \limsup_{n\to\infty} \frac{\floor{rn}}{n} \frac{1}{\floor{rn}} \log\left[ 1 + R^{s,\,\zeta_n}_{\SEP}\!\left(\rho^{\otimes \floor{rn}}\right) \right]\\
  &= r\, \limsup_{n\to\infty} \frac{1}{\floor{rn}} \log\left[ 1 + R^{s,\,\zeta_n}_{\SEP}\!\left(\rho^{\otimes \floor{rn}}\right) \right]\\
  &\textgeq{(i)} r\, \limsup_{n\to\infty} \frac{1}{n^{(r)}} \log\left[ 1 + R^{s,\,\zeta_n^{(r)}}_{\SEP}\!\left(\rho^{\otimes n^{(r)}}\right) \right]\\
  &\textgeq{(ii)} r\, \limsup_{n\to\infty} \frac{1}{n^{(r)}} \log\left[ 1 + R^{s,\,\zeta_n^{(r)}}_{\SEP}\!\left(\rho^{\otimes n} \right) \right]\\
  &\textgeq{(iii)} r\, \limsup_{n\to\infty} \frac{1}{n+\ceil{r}} \log\left[ 1 + R^{s,\,\zeta_n^{(r)}}_{\SEP}\!\left(\rho^{\otimes n} \right) \right]\\
   &\textgeq{(iv)} r\,  \inf_{(\zeta'_n)_n} \lset \limsup_{n\to\infty} \frac1n \log\left[ 1 + R^{s,\,\zeta'_n}_{\SEP}\!\left(\rho^{\otimes n}\right) \right] \bar \lim_{n\to\infty} \zeta'_n = 0\rset,
\end{aligned}\end{equation}
where: (i) follows since $(n^{(r)})_n$ is simply a subsequence of $(\floor{rn})_n$, up to possible repetitions, and for any $l$ such that $n^{(r)} = \floor{rl}$, it holds that $\zeta_n^{(r)} \geq \zeta_l$ by definition; (ii) $n^{(r)} \geq n$ and discarding copies can only decrease $R^s_\SEP$ (as per Axiom~III, clearly obeyed by the resource theory of entanglement); (iii) is a consequence of the fact that
\begin{equation}\begin{aligned}
  n^{(r)} - \ceil{r} = \floor{rl} - \ceil{r} \leq \floor{rl - r} = \floor{r (l-1)} \leq n
\end{aligned}\end{equation}
since each $n^{(r)}$ is the \emph{least} value of $\floor{rl}$ such that $\floor{rl} \geq n$; finally, (iv) follows since adding a constant in the denominator will not affect the asymptotic limit, and $(\zeta_n^{(r)})_n$ is still a vanishing error sequence. This shows Eq.~\eqref{eq:rs_additivity} and thus concludes the proof of the Lemma.
 \end{proof}

\begin{proof}[\upshape\bfseries Proof of Theorem~\ref{thm:prob_irrev-s}] 
In~\cite[Theorem~S9]{lami_2023} we showed that $E_{d,\NE}^{p=1}(\omega_3) = D^\infty_\SEP(\omega_3) = \log \frac32$. Lemma~\ref{lem:irrev_dist} then gives that $E_{d,\ANE,N}^{p>0}(\omega_3)$ equals the same value.

On the other hand, in~\cite[Theorem~S9]{lami_2023} we also showed that $E_{c,\NE}^{p=1}(\omega_3) = E_\tau(\omega_3) = 1$. By Lemma~\ref{lem:irrev_cost}, $E_{c,\ANE,N}^{p>0}(\omega_3)$ is the same and in particular is strictly larger than the distillable entanglement.
\end{proof}

We sincerely appreciate all readers who actually read all the way until this point and didn't just scroll down to see how long the paper is. 

%%%%%%%%%%%%%%%%%%%%%%%%%%%%%%%%%%%%%%%%%%%%%%%%%%%%%%%%%%%%%%%%%%%%%%%%%%%%%%%%%%%%%%%%
%%%%%%%%%%%%%%%%%%%%%%%%%%%%%%%%%%%%%%%%%%%%%%%%%%%%%%%%%%%%%%%%%%%%%%%%%%%%%%%%%%%%%%%%
%%%%%%%%%%%%%%%%%%%%%%%%%%%%%%%%%%%%%%%%%%%%%%%%%%%%%%%%%%%%%%%%%%%%%%%%%%%%%%%%%%%%%%%%

\end{document}